\newtheorem{theorem}{Theorem}[section]
\newtheorem{lemma}[theorem]{Lemma}
\newtheorem{proposition}[theorem]{Proposition}
\newtheorem{corollary}[theorem]{Corollary}
\theoremstyle{definition}
\newtheorem{definition}[theorem]{Definition}
\newtheorem{example}[theorem]{Example}
\theoremstyle{remark}
\numberwithin{equation}{section}
\begin{document}
\title{Quadratic Residue Codes over $\mathbb{F}_p+v\mathbb{F}_p$ and Their
Gray Images}
\author{Abidin Kaya}
\address{Abidin Kaya, Department of Mathematics, Fatih University, 34500,
Istanbul, Turkey }
\email{\texttt{akaya@fatih.edu.tr}}
\author{Bahattin Yildiz}
\address{Bahattin Yildiz, Department of Mathematics, Fatih University,
34500, Istanbul, Turkey }
\email{\texttt{byildiz@fatih.edu.tr}}
\author{Irfan Siap}
\address{Irfan Siap, Department of Mathematics, Yildiz Technical University,
34210, Istanbul, Turkey }
\email{\texttt{isiap@yildiz.edu.tr}}
\subjclass[2010]{Primary 94B05; Secondary 94B15}
\date{June, 2012}
\keywords{Quadratic residue codes, Self-dual codes, Codes over
rings, Bachoc weight}

\begin{abstract}
In this paper quadratic residue codes over the ring $\mathbb{F}_p+v\mathbb{F}%
_p$ are introduced in terms of their idempotent generators. The structure of
these codes is studied and it is observed that these codes share similar
properties with quadratic residue codes over finite fields. For the case $%
p=2 $, Euclidean and Hermitian self-dual families of codes as extended
quadratic residue codes are considered and two optimal Hermitian self-dual
codes are obtained as examples. Moreover, a substantial number of good $p$%
-ary codes are obtained as images of quadratic residue codes over $\mathbb{F}%
_p+v\mathbb{F}_p$ in the cases where $p$ is an odd prime. These results are
presented in tables.
\end{abstract}

\maketitle

\section{Introduction}

Quadratic residue codes fall into the family of BCH codes and have proven to
be a promising family of cyclic codes. They were first introduced by Andrew
Gleason and since then have generated a lot of interest. Two famous members
of this family which are Hamming and Golay codes are optimal and perfect
codes. Further, as extended quadratic residue cyclic codes, extended Hamming
and Golay codes also have been and are being applied in many applications in
data transmission such as Voyager (1979-1981), and more recently \cite%
{bae,dono,garcia,trots}. While initially quadratic residue codes were
studied within the confines of finite fields, there have been recent
developments in the form of quadratic residue codes over some special rings.

First, Pless and Qian studied quaternary quadratic residue codes (over the
ring $\mathbb{Z}_4$) and some of their properties in \cite{Pless}. In 2000,
Chiu et al. extended the ideas in \cite{Pless} to the ring $\mathbb{Z}_8$ in
\cite{chiu}. Recently, Taeri considered quadratic residue codes over the
ring $\mathbb{Z}_9$ in \cite{Taeri}.

Our aim in this paper is to introduce and study quadratic residue codes over
the ring $\mathbb{F}_{p}+v\mathbb{F}_{p}$ which is isomorphic to $\mathbb{F}%
_{p}\times \mathbb{F}_{p}$. Codes over $\mathbb{F}_{p}+v\mathbb{F}_{p}$ were
first introduced by Bachoc in \cite{Bachoc} together with a new weight. They
are shown to be connected to lattices and have since generated interest
among coding theorists. For some of the work in the literature about these
codes we refer the readers to \cite{Bachoc,betsumiya,Bonnecaze,Dougherty1}
and \cite{Harada}. Recently, Zhu et al. considered the structure of cyclic
codes over $\mathbb{F}_{2}+v\mathbb{F}_{2}$ in \cite{Zhu}.

We first give some preliminaries about the ring $\mathbb{F}_{p}+v\mathbb{F}%
_{p}$ and codes over $\mathbb{F}_{p}+v\mathbb{F}_{p}$ in Section 2. In
Section 3, quadratic residue codes over the ring $\mathbb{F}_{p}+v\mathbb{F}%
_{p}$ are defined and it is shown that they share similar properties with
quadratic residue codes over fields. In Section 4, we study the case $p=2$
and obtain Euclidean self-dual codes for $q=8r-1$ and Hermitian self-dual
codes for $q=8r+1$ as the extended quadratic residue codes over $\mathbb{F}%
_{2}+v\mathbb{F}_{2}$. Two well-known optimal Hermitian self-dual codes over
$\mathbb{F}_{2}+v\mathbb{F}_{2}$ are obtained from quadratic residue codes.

Section 5 includes the examples of QR codes for odd prime $p$. A number of
examples including best known, optimal self-dual and extremal $p$-ary codes
are obtained as Gray images of QR codes and the results are presented in
tables.

\section{Preliminaries}

The ring $\mathbb{F}_{p}+v\mathbb{F}_{p}$ is a commutative ring of order $%
p^{2}$ and characteristic $p$, subject to the restriction $v^{2}=v$. It is
easily observed that the ring $\mathbb{F}_{p}+v\mathbb{F}_{p}$ is isomorphic
to the ring $\mathbb{F}_{p}\times \mathbb{F}_{p}$. It has two maximal ideals
$\left\langle v\right\rangle $ and $\left\langle 1-v\right\rangle $. So, it
is not a local ring. The ring is Frobenius, and hence is suitable for
studying codes. A code $C$ of length $n$ over $\mathbb{F}_{p}+v\mathbb{F}%
_{p} $ is an $\left( \mathbb{F}_{p}+v\mathbb{F}_{p}\right) $-submodule of $%
\left( \mathbb{F}_{p}+v\mathbb{F}_{p}\right) ^{n}$. An element of $C$ is
called a codeword of $C$. A generator matrix of $C$ is a matrix whose rows
generate $C $. The Hamming weight of a codeword is the number of non-zero
components. Let $x=\left( x_{1},x_{2},\ldots ,x_{n}\right) $ and $y=\left(
y_{1},y_{2},\ldots ,y_{n}\right) $ be two elements of $\left( \mathbb{F}%
_{p}+v\mathbb{F}_{p}\right) ^{n}$. The Euclidean inner product is given as $%
\left\langle x,y\right\rangle _{E}=\sum x_{i}y_{i}$. The dual code $C^{\bot
} $ of $C$ with respect to the Euclidean inner product is defined as
\begin{equation*}
C^{\bot }=\left\{ x\in \left( \mathbb{F}_{p}+v\mathbb{F}_{p}\right) ^{n}\mid
\left\langle x,y\right\rangle _{E}=0\text{ for all }y\in C\right\}
\end{equation*}%
$C$ is self-dual if $C=C^{\bot }$.

In the sequel we let $R_{p,n}:=\left( \mathbb{F}_{p}+v\mathbb{F}_{p}\right) %
\left[ x\right] /\left( x^{n}-1\right) $. Where there is no confusion a
polynomial $f\left( x\right) $ is abbreviated as $f$. An idempotent is an
element $a(x)$ such that $a(x)^{2}=a(x)$. We characterize all idempotents in
$R_{p,n}$ in the following lemma:

\begin{lemma}
\label{idem}$\left\{ \left( 1-v\right) f+vh\mid f\text{ and }h\text{ are
idempotents in }\mathbb{F}_{p}\left[ x\right] /\left( x^{n}-1\right)
\right\} $ is the set of all idempotents in $R_{p,n}$.
\end{lemma}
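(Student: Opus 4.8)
The plan is to exploit the orthogonal idempotent decomposition of the coefficient ring $\F_p+v\F_p$. First I would record the elementary identities $v^2=v$, $(1-v)^2=1-v$, $v(1-v)=0$ and $v+(1-v)=1$, which say that $v$ and $1-v$ are orthogonal idempotents summing to $1$; these lift coefficientwise to $R_{p,n}$ and drive the whole argument. The representation step comes next: given an arbitrary $a\in R_{p,n}$, write each coefficient of $a$ as $b_i+vc_i$ with $b_i,c_i\in\F_p$. Using $(1-v)(b_i+vc_i)=(1-v)b_i$ and $v(b_i+vc_i)=v(b_i+c_i)$, I would obtain $a=(1-v)f+vh$, where $f=\sum b_i x^i$ and $h=\sum(b_i+c_i)x^i$ both lie in $\F_p[x]/(x^n-1)$. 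Hence every element of $R_{p,n}$ has the shape $(1-v)f+vh$ with $f,h$ over $\F_p$.

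I would then argue that this representation is \emph{unique}. If $(1-v)f+vh=0$ with $f,h$ over $\F_p$, multiplying separately by $1-v$ and by $v$ and using the orthogonality relations gives $(1-v)f=0$ and $vh=0$; since for $c\in\F_p$ one has $(1-v)c=0$ (resp.\ $vc=0$) only when $c=0$, comparing coefficients forces $f=h=0$. Thus $R_{p,n}=(1-v)\,\F_p[x]/(x^n-1)\oplus v\,\F_p[x]/(x^n-1)$ as a direct sum, and multiplication by $1-v$ (resp.\ $v$) is injective on $\F_p$-coefficient polynomials. With uniqueness in hand, I would square $a=(1-v)f+vh$ and let the cross term drop out via $v(1-v)=0$, obtaining $a^2=(1-v)^2f^2+v^2h^2=(1-v)f^2+vh^2$. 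Comparing $a^2$ with $a$ through the unique decomposition, $a^2=a$ holds exactly when $(1-v)(f^2-f)=0$ and $v(h^2-h)=0$, which by the injectivity just noted is equivalent to $f^2=f$ and $h^2=h$ in $\F_p[x]/(x^n-1)$.

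This settles both inclusions simultaneously: any $(1-v)f+vh$ with $f,h$ idempotent is itself idempotent, and conversely any idempotent of $R_{p,n}$, already written as $(1-v)f+vh$, forces $f$ and $h$ to be idempotent. I do not expect a deep obstacle here; the only point requiring care is pinning down the uniqueness of the $(1-v)f+vh$ representation together with the injectivity of multiplication by $1-v$ and by $v$, since that is precisely what permits passing from the single relation $a^2=a$ to the two independent conditions $f^2=f$ and $h^2=h$. This is really the Chinese Remainder isomorphism $\F_p+v\F_p\cong\F_p\times\F_p$ made explicit and transported to the polynomial ring.
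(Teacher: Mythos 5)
Your proposal is correct and follows essentially the same route as the paper's proof: square $(1-v)f+vh$, let the cross term vanish via $v(1-v)=0$ to get $(1-v)f^2+vh^2$, and compare components to deduce $f^2=f$ and $h^2=h$, with the converse being the same computation read backwards. The only difference is that you explicitly establish the existence and uniqueness of the representation $a=(1-v)f+vh$ with $f,h\in\F_p[x]/(x^n-1)$ (i.e.\ the direct-sum decomposition and the injectivity of multiplication by $v$ and $1-v$), steps the paper's proof uses tacitly when it starts from an arbitrary idempotent in that form and cancels to conclude idempotency of $f$ and $h$ --- a welcome tightening, not a different argument.
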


\begin{proof}
Let $g=\left( 1-v\right) f+vh$ be an arbitrary idempotent in $R_{p,n}$ then,
\begin{eqnarray*}
g &=&g^{2}=\left( 1-v\right) ^{2}f{}^{2}+v^{2}h{}^{2}\text{ (since }v\left(
1-v\right) =0\text{) and so we get,} \\
\left( 1-v\right) f+vh &=&\left( 1-v\right) f{}^{2}+vh{}^{2}\text{,}
\end{eqnarray*}%
which implies that $f$ and $h$ are idempotents in $\mathbb{F}%
_{p}[x]/(x^{n}-1)$.

Conversely, if $f$ and $h$ are idempotents in $\mathbb{F}_{p}[x]/(x^{n}-1)$
then $\left( 1-v\right) f+vh$ is an idempotent in\ $R_{p,n}$, since
\begin{eqnarray*}
\left[ \left( 1-v\right) f+vh\right] ^{2} &=&\left( 1-v\right)
^{2}f{}^{2}+v^{2}h{}^{2} \\
&=&\left( 1-v\right) f+vh.
\end{eqnarray*}%
Hence, $\left\{ \left( 1-v\right) f+vh\mid f\text{ and }h\text{ are
idempotents in }\mathbb{F}_{p}\left[ x\right] /\left( x^{n}-1\right)
\right\} $ is the set of all idempotents in $R_{p,n}$.
\end{proof}

In the following theorems $R$ is a finite commutative ring with identity:

\begin{theorem}
\cite{Huffman}\cite{Taeri} Let $f,g$ be idempotents of $R\left[ x\right]
/\left( x^{n}-1\right) $ and let $C_{1}=\left\langle f\right\rangle $, $%
C_{2}=\left\langle g\right\rangle $ be cyclic codes over $R$. Then $%
C_{1}\cap C_{2}$ and $C_{1}+C_{2}$ have idempotent generators $fg$ and $%
f+g-fg$, respectively.
\end{theorem}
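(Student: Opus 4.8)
The plan is to build everything on a single elementary fact: an idempotent $e$ acts as a multiplicative identity on the ideal it generates. Concretely, if $c\in\langle e\rangle$ then $c=er$ for some $r$, whence $ec=e^{2}r=er=c$. I would record this observation first, since both halves of the statement reduce to it. (Commutativity and the presence of an identity in $R$, both hypothesized, are used freely throughout.)

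For the intersection, I would first verify that $fg$ is again an idempotent, which is immediate from commutativity: $(fg)^{2}=f^{2}g^{2}=fg$. The inclusion $\langle fg\rangle\subseteq C_{1}\cap C_{2}$ is then routine, because $fg\in\langle f\rangle$ and $fg\in\langle g\rangle$. For the reverse inclusion I would take $c\in C_{1}\cap C_{2}$ and apply the key fact twice: membership in $C_{1}=\langle f\rangle$ gives $fc=c$, and membership in $C_{2}=\langle g\rangle$ gives $gc=c$, so $c=f(gc)=(fg)c\in\langle fg\rangle$. This closes the loop and identifies $C_{1}\cap C_{2}$ with $\langle fg\rangle$.

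For the sum I would set $e=f+g-fg$ and again begin by checking idempotency; expanding $e^{2}$ and repeatedly using $f^{2}=f$ and $g^{2}=g$ collapses the cross terms back to $e$. To get $\langle e\rangle\subseteq C_{1}+C_{2}$ I would write $e=(f-fg)+g=f(1-g)+g$, exhibiting $e$ as a sum of an element of $C_{1}$ and an element of $C_{2}$. The reverse inclusion is the only place that needs a small trick: I would compute $fe=f+fg-fg=f$ and $ge=fg+g-fg=g$, showing that both generators $f$ and $g$ lie in $\langle e\rangle$, so that $C_{1}+C_{2}\subseteq\langle e\rangle$.

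I do not expect a genuine obstacle here; the content is the identity-on-its-own-ideal lemma, and once that is isolated every inclusion becomes a one-line multiplication. The only point demanding a little care is the reverse inclusion for the sum, where one should resist trying to factor $e$ directly and instead recover $f$ and $g$ as the products $fe$ and $ge$.
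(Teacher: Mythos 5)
Your proof is correct and complete: the key lemma that an idempotent acts as an identity on the ideal it generates, the computation $c=(fg)c$ for the intersection, and the recovery of the generators via $fe=f$, $ge=g$ for the sum are exactly the standard argument. The paper itself gives no proof of this statement (it is quoted from \cite{Huffman} and \cite{Taeri}), and your argument coincides with the classical one in those references, so there is nothing further to compare.
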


\begin{theorem}
\cite{Huffman}\cite{Taeri} Let $f\left( x\right) $ be the idempotent
generator of an $R$-cyclic code $C$. Then $1-f\left( x^{-1}\right) $ is the
idempotent generator of the dual code $C^{\perp }$.
\end{theorem}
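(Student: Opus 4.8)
The plan is to identify the dual code with an annihilator ideal and then exploit the idempotence of the generator. Throughout I identify a codeword $a=(a_0,\dots,a_{n-1})$ with the polynomial $a(x)=\sum_i a_i x^i$ in $R[x]/(x^n-1)$, so that the cyclic code $C=\langle f\rangle$ is an ideal.

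First I would record that $1-f(x^{-1})$ is again an idempotent. The substitution $x\mapsto x^{-1}$ is a ring automorphism of $R[x]/(x^n-1)$, since it merely permutes the monomial basis by $x^i\mapsto x^{n-i}$. Hence it carries the idempotent $f$ to the idempotent $f(x^{-1})$, and then $\bigl(1-f(x^{-1})\bigr)^2 = 1-2f(x^{-1})+f(x^{-1}) = 1-f(x^{-1})$, as needed. Writing $e:=f(x^{-1})$, this says $e$ and $1-e$ are both idempotent.

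The heart of the argument is the standard dictionary between the Euclidean inner product and polynomial multiplication. Computing the coefficient of $x^m$ in $a(x)b(x^{-1})=\sum_{i,j} a_i b_j x^{i-j}$ gives $\sum_i a_i b_{i-m}$ (indices read modulo $n$), which is exactly $\langle a,\sigma^m(b)\rangle$, the inner product of $a$ with the $m$-fold cyclic shift of $b$. Therefore $a(x)b(x^{-1})=0$ in $R[x]/(x^n-1)$ if and only if $a$ is orthogonal to every cyclic shift of $b$. Because $C$ is shift-invariant, $a\in C^{\perp}$ holds precisely when $a(x)g(x^{-1})=0$ for every $g\in C$. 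Now I would bring in the idempotent generator: every $g\in C=\langle f\rangle$ has the form $g=rf$, so $g(x^{-1})=r(x^{-1})e$, and as $r$ ranges over the whole ring so does $r(x^{-1})$. By commutativity, the family of conditions $a(x)g(x^{-1})=0$ collapses to the single equation $a(x)\,e=0$, i.e. $C^{\perp}=\mathrm{Ann}(e)$. The proof then closes with the elementary identity that the annihilator of an idempotent $e$ is the principal ideal $\langle 1-e\rangle$: if $a=b(1-e)$ then $ae=b(e-e^2)=0$, and conversely $ae=0$ forces $a=a\bigl(e+(1-e)\bigr)=a(1-e)\in\langle 1-e\rangle$. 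Combining these gives $C^{\perp}=\langle 1-f(x^{-1})\rangle$ with the idempotent generator $1-f(x^{-1})$.

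The step I expect to require the most care is the inner-product/coefficient correspondence: one must track the index shifts modulo $n$ correctly and verify that nothing beyond the commutativity of $R$ enters, so that the identification $C^{\perp}=\mathrm{Ann}(f(x^{-1}))$ holds over an arbitrary finite commutative ring and not merely over a field. The remaining ingredients — the idempotence of $1-f(x^{-1})$ and the annihilator-of-an-idempotent identity — are short formal computations that need no further structure on $R$.
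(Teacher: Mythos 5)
Your proof is correct, and it is worth noting that the paper itself offers no proof at all for this statement --- it is quoted from \cite{Huffman} and \cite{Taeri} as a known result. So the comparison is really with the standard references: over a field, the textbook proof in Huffman--Pless goes through the generator/check polynomial machinery ($C^\perp$ is generated by the reciprocal of the check polynomial, and one then verifies that $1-f(x^{-1})$ is an idempotent lying in, and generating, that code), and Taeri adapts this to $\mathbb{Z}_9$ using the specific structure of that ring. Your route is different and, in this context, better suited to the paper's level of generality (``$R$ a finite commutative ring with identity''): you bypass generator polynomials entirely, establish the coefficient identity showing that $a(x)b(x^{-1})=0$ in $R[x]/(x^n-1)$ exactly when $a$ is orthogonal to all cyclic shifts of $b$, deduce $C^{\perp}=\mathrm{Ann}\bigl(f(x^{-1})\bigr)$ from shift-invariance and commutativity, and finish with the identity $\mathrm{Ann}(e)=\langle 1-e\rangle$ for an idempotent $e$. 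All steps check out, including the index bookkeeping ($\sum_i a_i b_{i-m}$ as the coefficient of $x^m$) and the observation that $r\mapsto r(x^{-1})$ is surjective because $x\mapsto x^{-1}$ is an automorphism; nothing beyond commutativity is used, so the argument is valid over any commutative ring, finiteness included but not needed. The only point you leave tacit is the article ``the'': the theorem asserts $1-f(x^{-1})$ is \emph{the} idempotent generator, and uniqueness follows from the standard fact that two idempotents generating the same ideal in a commutative ring are equal (if $\langle e\rangle=\langle e'\rangle$ then $e=ee'=e'$); adding that one line would make the proof fully complete.
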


The extended code of a code $C$ over $\mathbb{F}_{p}+v\mathbb{F}_{p}$ will
be denoted by $\overline{C}$, which is the code obtained by adding a
specific column to the generator matrix of $C$.

For the rest of this work, $q$ is an odd prime such that $p$ is a quadratic
residue modulo $q$, we set $e_{1}(x)=\sum\limits_{i\in Q_{q}}x^{i}$ and $%
e_{2}(x)=\sum\limits_{i\in N_{q}}x^{i}$, where $Q_{q}$ denotes the set of
quadratic residues modulo $q$ and $N_{q}$ denotes the set of quadratic
non-residues modulo $q$.

Let $a$ be a non-zero element of $\mathbb{F}_{q}$, the map $\mu _{a}:\mathbb{%
F}_{q}\rightarrow \mathbb{F}_{q}$ is defined as $\mu _{a}\left( i\right) =ai%
\pmod{q}$. This map acts on polynomials as
\begin{equation*}
\mu _{a}\left( \sum_{i}x^{i}\right) =\sum_{i}x^{\mu _{a}(i)}.
\end{equation*}%
It is easy to see that $\mu _{a}\left( fg\right) =\mu _{a}\left( f\right)
\mu _{a}\left( g\right) $ for polynomials $f$ and $g$ in $R_{p,q}$.

\subsection{Case I}

If $p$ is an odd prime,%
\begin{eqnarray*}
\varphi &:&\mathbb{F}_{p}+v\mathbb{F}_{p}\rightarrow \mathbb{F}_{p}^{2} \\
a+bv &\mapsto &\left( -b,2a+b\right) \text{.}
\end{eqnarray*}%
was defined as the Gray map in \cite{Zhu2}. The Lee weight of an element in $%
\mathbb{F}_{p}+v\mathbb{F}_{p}$ is defined as the Hamming weight of its Gray
image; in other words%
\begin{equation*}
w_{L}\left( a+bv\right) =\left\{
\begin{array}{l}
0,\text{if }a=0,b=0 \\
1,\text{if }a\neq 0,b=0 \\
1,\text{if }b\neq 0,2a+b\equiv 0\pmod{p} \\
2,\text{if }b\neq 0,2a+b\neq 0\pmod{p}%
\end{array}%
\right.
\end{equation*}%
The Gray map $\varphi $ is extended to $\left( \mathbb{F}_{p}+v\mathbb{F}%
_{p}\right) ^{n}$ componentwise, naturally.

\begin{proposition}
\cite{Zhu2} The Gray map $\varphi $ is a distance-preserving map from ($%
\left( \mathbb{F}_{p}+v\mathbb{F}_{p}\right) ^{n}$, Lee distance) to ($%
\mathbb{F}_{p}^{2n}$, Hamming distance) and it is also $\mathbb{F}_{p}$%
-linear.
\end{proposition}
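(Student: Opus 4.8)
The plan is to reduce the distance statement to a weight statement and to handle both linearity and weight preservation first at the level of a single coordinate, then extend componentwise. I would begin with $\mathbb{F}_p$-linearity. Regarding $\mathbb{F}_p+v\mathbb{F}_p$ as a two-dimensional $\mathbb{F}_p$-vector space with basis $\{1,v\}$, I would compute $\varphi(\alpha(a+bv)+\beta(c+dv))$ for scalars $\alpha,\beta\in\mathbb{F}_p$ and check that it equals $\alpha\varphi(a+bv)+\beta\varphi(c+dv)$. Since the two coordinates of $\varphi$, namely $-b$ and $2a+b$, are $\mathbb{F}_p$-linear forms in $(a,b)$, this is immediate, and linearity of the componentwise extension to $(\mathbb{F}_p+v\mathbb{F}_p)^n$ follows because the blocks are treated independently.

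Next I would establish weight preservation, $w_L(x)=w_H(\varphi(x))$. For a single coordinate this is essentially the definition, since $w_L(a+bv)$ is declared to be the Hamming weight of $\varphi(a+bv)$; if desired, consistency with the explicit four-case formula is confirmed by evaluating $\varphi(a+bv)=(-b,2a+b)$ in each case and counting nonzero entries, using that $p$ is odd so that $2a\neq 0$ whenever $a\neq 0$. For a vector of length $n$, the Gray image $\varphi(x)$ is the concatenation of the blocks $\varphi(x_i)$, so the Hamming weight of $\varphi(x)$ decomposes as $\sum_i w_H(\varphi(x_i))$, which coincides with the Lee weight $w_L(x)=\sum_i w_L(x_i)$ by the single-coordinate identity.

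Finally I would combine the two ingredients. For $x,y\in(\mathbb{F}_p+v\mathbb{F}_p)^n$, linearity gives $\varphi(x-y)=\varphi(x)-\varphi(y)$, and therefore
\[
d_L(x,y)=w_L(x-y)=w_H(\varphi(x-y))=w_H(\varphi(x)-\varphi(y))=d_H(\varphi(x),\varphi(y)),
\]
which is the distance-preserving claim. There is no serious obstacle here; the only point requiring care is to keep the two parts distinct, because the passage from weight preservation to distance preservation genuinely relies on the $\mathbb{F}_p$-linearity of $\varphi$ (each distance being the weight of a difference), so linearity is not an afterthought but the mechanism that transports weights to distances.
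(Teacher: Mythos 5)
Your proposal is correct and complete: the single-coordinate weight identity is indeed immediate from the paper's definition of $w_L$ (with the four-case formula verified using that $2$ is invertible for odd $p$), the $\mathbb{F}_p$-linearity follows from the coordinates $-b$ and $2a+b$ being linear forms, and the passage $d_L(x,y)=w_L(x-y)=w_H(\varphi(x)-\varphi(y))=d_H(\varphi(x),\varphi(y))$ is exactly the right mechanism. The paper itself states this proposition without proof, citing \cite{Zhu2}, and your argument is the standard one that the cited source uses, so there is nothing to reconcile; you even correctly flag the one subtlety the statement hides, namely that distance preservation is not a separate fact but a consequence of weight preservation \emph{together with} linearity.
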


\begin{proposition}
\label{duality}The Gray image of a self-dual code over $\mathbb{F}_{p}+v%
\mathbb{F}_{p}$ is a $p$-ary self-dual code.
\end{proposition}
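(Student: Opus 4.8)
The plan is to verify the three defining features of a $p$-ary self-dual code for the image $\varphi(C)$: that it is $\F_p$-linear, that it is self-orthogonal, and that it has exactly the right size, namely $p^{n}$ codewords when $C$ has length $n$. Linearity is immediate from the preceding proposition, which already records that $\varphi$ is $\F_p$-linear; hence $\varphi(C)$ is an $\F_p$-linear code of length $2n$.

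For self-orthogonality I would compute the Euclidean inner product of two Gray images directly. Writing $x_i=a_i+b_iv$ and $y_i=c_i+d_iv$, a short expansion gives $\langle \varphi(x_i),\varphi(y_i)\rangle = (-b_i)(-d_i)+(2a_i+b_i)(2c_i+d_i)$, which simplifies to $4a_ic_i+2(a_id_i+b_ic_i+b_id_i)$. Since $x_iy_i = a_ic_i+(a_id_i+b_ic_i+b_id_i)v$ (using $v^2=v$), this equals $4\alpha_i+2\beta_i$ where $x_iy_i=\alpha_i+\beta_iv$. Summing over $i$ I obtain $\langle\varphi(x),\varphi(y)\rangle = 4\sum_i\alpha_i + 2\sum_i\beta_i$, where $\sum_ix_iy_i=\langle x,y\rangle_E=(\sum_i\alpha_i)+(\sum_i\beta_i)v$. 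Thus whenever $\langle x,y\rangle_E=0$ in $\F_p+v\F_p$, both coordinates $\sum_i\alpha_i$ and $\sum_i\beta_i$ vanish in $\F_p$, forcing $\langle\varphi(x),\varphi(y)\rangle=0$. Applying this to $x,y\in C$, where $C=C^{\perp}$ makes every such inner product zero, shows $\varphi(C)\subseteq\varphi(C)^{\perp}$.

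It remains to pin down the dimension, and this is the step that carries the real content. Because $\varphi$ is given by an invertible $\F_p$-matrix (its determinant is $2$, a unit since $p$ is odd), $\varphi$ is a bijection on each coordinate, so $\abs{\varphi(C)}=\abs{C}$. To evaluate $\abs{C}$ I would invoke the Frobenius property of $\F_p+v\F_p$ noted in the preliminaries, which yields $\abs{C}\,\abs{C^{\perp}}=\abs{\F_p+v\F_p}^{\,n}=p^{2n}$; self-duality then gives $\abs{C}=p^{n}$. Alternatively one can split $C$ through the isomorphism $\F_p+v\F_p\cong\F_p\times\F_p$ into a pair of $p$-ary self-dual codes of length $n$ and multiply their sizes. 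Consequently $\dim_{\F_p}\varphi(C)=n$, so $\dim_{\F_p}\varphi(C)^{\perp}=2n-n=n$; combined with the inclusion $\varphi(C)\subseteq\varphi(C)^{\perp}$ this forces $\varphi(C)=\varphi(C)^{\perp}$. The main obstacle is precisely this dimension count: the self-orthogonality computation only certifies $\varphi(C)\subseteq\varphi(C)^{\perp}$, and one must separately guarantee that $\varphi(C)$ is large enough, which is where the Frobenius (or CRT) structure of the ring is essential.
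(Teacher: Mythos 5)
Your proof is correct and takes essentially the same route as the paper: the identical coordinatewise computation showing that $\langle\varphi(x),\varphi(y)\rangle = 4\alpha + 2\beta$ when $\langle x,y\rangle_E = \alpha+\beta v$, so that orthogonality of codewords forces orthogonality of their Gray images, followed by a cardinality comparison to upgrade $\varphi(C)\subseteq\varphi(C)^{\perp}$ to equality. The only difference is one of detail: the paper compresses the final step into the phrase ``the result follows from the size of the image,'' whereas you spell it out ($\abs{C}=p^{n}$ via the Frobenius property, injectivity of $\varphi$ from its invertible matrix with determinant $2$, and the dimension count over $\F_p$), which is a faithful expansion rather than a different argument.
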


\begin{proof}
It is enough to show that the extended Gray map preserves orthogonality,
then the result follows from the size of the image. Let $\overline{a}+%
\overline{b}v$ and $\overline{c}+\overline{d}v$ where $\overline{a},%
\overline{b},\overline{c}$ and $\overline{d}\in \mathbb{F}_{p}^{n}$  be two
codewords of length $n$ over $\mathbb{F}_{p}+v\mathbb{F}_{p}$ such that they
are orthogonal, then
\begin{eqnarray*}
\left( \overline{a}+\overline{b}v\right) \left( \overline{c}+\overline{d}%
v\right)  &=&0 \\
\overline{ac}+\left( \overline{ad+bc+bd}\right) v &=&0
\end{eqnarray*}%
Now, consider the inner product of the Gray images;
\begin{eqnarray*}
\left( \overline{-b},\overline{2a+b}\right) \left( \overline{-d},\overline{%
2c+d}\right)  &=&\overline{bd+4ac+2ad+2bc+bd} \\
&=&\overline{4ac+2\left( ad+bc+bd\right) }.
\end{eqnarray*}%
It is easily observed that if two codewords are orthogonal then so are their
Gray images.
\end{proof}

In \cite{Zhu2} it was shown that $S_{n}:=(\mathbb{F}_{p}+v\mathbb{F}%
_{p})[x]/(x^{n}-\left( 1-2v\right) )$ is a principal ideal ring. In the
following proposition we describe the cyclic codes of odd lengths in terms
of ideals in $S_{n}$:

\begin{proposition}
Let $\psi :\left( \mathbb{F}_{p}+v\mathbb{F}_{p}\right) \left[ x\right]
/\left( x^{n}-1\right) \rightarrow (\mathbb{F}_{p}+v\mathbb{F}%
_{p})[x]/(x^{n}-\left( 1-2v\right) )$ be defined as
\begin{equation*}
\psi \left( c\left( x\right) \right) =c\left( \left( 1-2v\right) x\right) .
\end{equation*}%
If $n$ is odd, then $\psi $ is a ring isomorphism from $R_{p,n}$ to $S_{n}$.
\end{proposition}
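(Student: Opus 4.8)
The plan is to realize $\psi$ as the substitution homomorphism $x\mapsto (1-2v)x$ and then exhibit an explicit two-sided inverse. The first thing I would record is that $1-2v$ is not merely a unit but an involution: since $v^2=v$, one computes $(1-2v)^2=1-4v+4v^2=1-4v+4v=1$, so $(1-2v)^{-1}=1-2v$. This single identity is what makes the whole argument symmetric, and it is worth isolating at the outset. Next I would observe that the assignment $x\mapsto (1-2v)x$, together with the identity on coefficients, defines a ring homomorphism $\Psi$ from the polynomial ring $\left(\mathbb{F}_p+v\mathbb{F}_p\right)[x]$ into $S_n$, simply because substitution into a polynomial is always a ring map.

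The heart of the matter is to check that $\Psi$ descends to the quotient $R_{p,n}$, i.e.\ that $\Psi(x^n-1)=0$ in $S_n$. Here is where the hypothesis that $n$ is odd is used, and it is the step I expect to be the true crux. Computing,
\begin{equation*}
\Psi(x^n-1)=\left((1-2v)x\right)^n-1=(1-2v)^n x^n-1=(1-2v)x^n-1,
\end{equation*}
where the last equality uses $(1-2v)^n=(1-2v)$ for odd $n$, which follows from $(1-2v)^2=1$. Since $x^n=1-2v$ holds in $S_n$, this reduces to $(1-2v)^2-1=1-1=0$. Hence $\Psi$ factors through $(x^n-1)$ and yields a well-defined ring homomorphism $\psi\colon R_{p,n}\to S_n$ given by $\psi(c(x))=c((1-2v)x)$. (One can note in passing that for even $n$ the same computation gives $x^n-1\mapsto -2v\neq 0$, so the oddness of $n$ is genuinely essential and not an artifact of the proof.)

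To finish, I would construct the inverse using the very same formula. Define $\phi\colon S_n\to R_{p,n}$ by $\phi(d(x))=d((1-2v)x)$; the analogous computation $((1-2v)x)^n-(1-2v)=(1-2v)x^n-(1-2v)$ vanishes in $R_{p,n}$ because there $x^n=1$, so $\phi$ is well defined. Finally, both composites $\psi\circ\phi$ and $\phi\circ\psi$ apply the substitution $x\mapsto(1-2v)x$ twice, which sends $x\mapsto (1-2v)^2 x=x$ and fixes coefficients; thus each composite is the identity. Therefore $\psi$ is a ring isomorphism with inverse $\phi$. The only subtle point throughout is the well-definedness computation above; once $(1-2v)^2=1$ and the parity of $n$ are in hand, the remaining verifications are formal, and checking cardinalities is unnecessary since an explicit inverse is available.
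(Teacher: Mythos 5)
Your proof is correct, and while it rests on the same key facts as the paper's, it finishes by a genuinely different route. Both arguments hinge on the identity $(1-2v)^2=1$, hence $(1-2v)^n=1-2v$ for odd $n$, and on the computation $\left((1-2v)x\right)^n-1=(1-2v)\left(x^n-(1-2v)\right)$ modulo the relevant relation. The paper packages this as an equivalence of congruences, $a(x)\equiv b(x)\pmod{x^n-1}$ if and only if $a((1-2v)x)\equiv b((1-2v)x)\pmod{x^n-(1-2v)}$, reads well-definedness off one implication and injectivity off the other, and then invokes finiteness of the two rings to conclude surjectivity. You instead descend the substitution homomorphism through the quotient and exhibit the explicit two-sided inverse $\phi(d(x))=d((1-2v)x)$, checking that both composites fix $x$ and the coefficients; this avoids the cardinality/pigeonhole step entirely, and it would work verbatim over an infinite coefficient ring containing a unit $u$ with $u^2=1$, so it is marginally more general and arguably cleaner. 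Your parenthetical observation that for even $n$ one gets $x^n-1\mapsto -2v\neq 0$ (valid here since $p$ is odd, the standing assumption of this subsection, so $2v\neq 0$) shows the oddness hypothesis is sharp, a point the paper does not record. One small stylistic note: when you assert that the two composites are the identity because they agree with it on $x$ and on coefficients, you are implicitly using that a ring endomorphism of $\left(\mathbb{F}_p+v\mathbb{F}_p\right)[x]/(f)$ is determined by its values on the coefficient ring and on the class of $x$; this is standard, but worth a half-sentence.
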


\begin{proof}
Note that $1-2v$ is a unit in $\mathbb{F}_{p}+v\mathbb{F}_{p}$ with $\left(
1-2v\right) ^{2}=1$. But this implies that if $n$ is odd, then
\begin{equation*}
\left( 1-2v\right) ^{n}=\left( 1-2v\right) .
\end{equation*}%
Now, suppose $a\left( x\right) \equiv b\left( x\right) \pmod{
x^{n}-1}$, i.e. $a\left( x\right) -b\left( x\right) =\left( x^{n}-1\right)
q\left( x\right) $ for some $q\left( x\right) \in \left( \mathbb{F}_{p}+v%
\mathbb{F}_{p}\right) \left[ x\right] $. Then%
\begin{eqnarray*}
a\left( \left( 1-2v\right) x\right) -b\left( \left( 1-2v\right) x\right)
&=&\left( \left( 1-2v\right) ^{n}x^{n}-1\right) q\left( \left( 1-2v\right)
x\right) \\
&=&\left( \left( 1-2v\right) x^{n}-\left( 1-2v\right) ^{2}\right) q\left(
\left( 1-2v\right) x\right) \\
&=&\left( 1-2v\right) \left( x^{n}-\left( 1-2v\right) \right) q\left( \left(
1-2v\right) x\right) ,
\end{eqnarray*}%
which means if $a\left( x\right) \equiv b\left( x\right) \pmod{
x^{n}-1}$, then $a\left( \left( 1-2v\right) x\right) \equiv b\left( \left(
1-2v\right) x\right) \pmod{x^{n}-\left(
1-2v\right)}$. The converse can easily be shown as well which means%
\begin{equation*}
a\left( x\right) \equiv b\left( x\right) \pmod{x^{n}-1}\Leftrightarrow
a\left( \left( 1-2v\right) x\right) \equiv b\left( \left( 1-2v\right)
x\right) \pmod{x^{n}-\left( 1-2v\right)}.
\end{equation*}%
Note that one side of the implication tells us that $\psi $ is well-defined
and the other side tells us that it is injective, but since the rings are
finite this proves that $\psi $ is an isomorphism.
\end{proof}

The following corollaries follow naturally from the proposition:

\begin{corollary}
\label{isomorphism} $I$ is an ideal of $R_{p,n}$ if and only if $\psi \left(
I\right) $ is an ideal of $S_{n}$ when $n$ is odd.
\end{corollary}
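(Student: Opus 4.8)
The plan is to derive this directly from the preceding proposition, which establishes that $\psi$ is a ring isomorphism from $R_{p,n}$ onto $S_n$ whenever $n$ is odd. Since a ring isomorphism induces an inclusion-preserving bijection between the ideals of its source and the ideals of its target, the corollary should reduce to a standard verification rather than any new computation.

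First I would treat the forward implication. Assuming $I$ is an ideal of $R_{p,n}$, I would check that $\psi(I)$ is an additive subgroup of $S_n$, which is immediate since $\psi$ is in particular an isomorphism of additive groups, and then verify the absorption property. Given $\psi(a)\in\psi(I)$ with $a\in I$ and an arbitrary $s\in S_n$, I would use surjectivity of $\psi$ to write $s=\psi(r)$ for some $r\in R_{p,n}$, so that $s\,\psi(a)=\psi(r)\psi(a)=\psi(ra)$. Because $I$ is an ideal of $R_{p,n}$, we have $ra\in I$, and hence $s\,\psi(a)=\psi(ra)\in\psi(I)$. This shows $\psi(I)$ absorbs multiplication by elements of $S_n$, so $\psi(I)$ is an ideal.

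For the reverse implication I would simply apply the same argument to the inverse map. Since $\psi$ is an isomorphism, $\psi^{-1}:S_n\to R_{p,n}$ is again a ring isomorphism, so if $\psi(I)$ is an ideal of $S_n$, then $\psi^{-1}\bigl(\psi(I)\bigr)=I$ is an ideal of $R_{p,n}$ by the computation above with the roles of the two rings reversed.

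Because the statement is an immediate consequence of the ideal correspondence induced by a ring isomorphism, I do not anticipate any genuine obstacle. The only point that requires attention is the appeal to surjectivity of $\psi$ in checking the absorption property, since it is precisely surjectivity that allows an arbitrary element of the target ring to be expressed as $\psi(r)$ and thereby pulled back into $R_{p,n}$; this is guaranteed by the preceding proposition.
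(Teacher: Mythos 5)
Your proof is correct and matches the paper's intent: the paper states this corollary without proof as an immediate consequence of the preceding proposition that $\psi$ is a ring isomorphism, and your argument simply spells out the standard ideal-correspondence details (surjectivity for absorption, $\psi^{-1}$ for the converse) that the authors regarded as routine.
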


\begin{corollary}
\label{principalodd}Cyclic codes over $\mathbb{F}_{p}+v\mathbb{F}_{p}$ of
odd length are principally generated.
\end{corollary}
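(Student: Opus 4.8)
The plan is to read the result straight off the ring isomorphism $\psi$ of the preceding proposition together with the fact, established in \cite{Zhu2}, that $S_n$ is a principal ideal ring. First I would recall that a cyclic code of length $n$ over $\mathbb{F}_{p}+v\mathbb{F}_{p}$ is, by definition, an ideal of $R_{p,n}$. Hence it suffices to prove that every ideal of $R_{p,n}$ is principal whenever $n$ is odd.

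So fix an ideal $I$ of $R_{p,n}$ with $n$ odd. By Corollary \ref{isomorphism}, $\psi(I)$ is an ideal of $S_n$, and since $S_n$ is a principal ideal ring there is some $g(x)\in S_n$ with $\psi(I)=\left\langle g(x)\right\rangle$. The remaining step is purely formal: because $\psi$ is a ring isomorphism it sends generators to generators, so pulling back along $\psi^{-1}$ yields $I=\psi^{-1}\left(\left\langle g(x)\right\rangle\right)=\left\langle \psi^{-1}(g(x))\right\rangle$, which exhibits $I$ as a principal ideal of $R_{p,n}$. This is what is to be shown.

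The only point that deserves a word is the description of $\psi^{-1}$. Since $(1-2v)^2=1$ in $\mathbb{F}_{p}+v\mathbb{F}_{p}$, substituting $x\mapsto(1-2v)x$ twice returns each polynomial unchanged, so the map $g(x)\mapsto g\left((1-2v)x\right)$ sends $S_n$ back into $R_{p,n}$ and inverts $\psi$; concretely one gets $I=\left\langle g\left((1-2v)x\right)\right\rangle$. I do not expect any genuine obstacle here: all the mathematical content is carried by the isomorphism of the previous proposition and by the principal ideal ring structure of $S_n$ quoted from \cite{Zhu2}. The main thing to be careful about is to invoke Corollary \ref{isomorphism} so that the correspondence between ideals of $R_{p,n}$ and ideals of $S_n$ is matched up correctly.
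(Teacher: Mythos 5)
Your proof is correct and is exactly the argument the paper intends: the paper states this corollary as an immediate consequence of the proposition that $\psi$ is a ring isomorphism from $R_{p,n}$ to $S_n$ for odd $n$, combined with the fact from \cite{Zhu2} that $S_n$ is a principal ideal ring, which is precisely your route via Corollary \ref{isomorphism} and pulling back a generator. Your extra remark that $\psi^{-1}$ is again the substitution $x\mapsto\left(1-2v\right)x$ because $\left(1-2v\right)^{2}=1$ is a correct and harmless elaboration of what the paper leaves implicit.
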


\subsection{Case II}

If $p=2$ then the Lee weight for $\mathbb{F}_{2}+v\mathbb{F}_{2}$ is defined
as
\begin{equation*}
w_{L}\left( 0\right) =0,w_{L}\left( 1\right) =2,w_{L}\left( 1+v\right)
=1,w_{L}\left( v\right) =1\text{,}\
\end{equation*}%
and the following Gray map is a linear isometry with respect to the Lee
weight:

\begin{eqnarray*}
\varphi &:&\mathbb{F}_{2}+v\mathbb{F}_{2}\rightarrow \mathbb{F}_{2}^{2} \\
a+bv &\mapsto &\left( a,a+b\right) \text{.}
\end{eqnarray*}

The Gray map is extended componentwise and preserves self-duality \cite%
{Dougherty1}.

In \cite{Bachoc}, Bachoc defined the following weight on $\mathbb{F}_{2}+v%
\mathbb{F}_{2}$:
\begin{equation*}
w_{B}\left( 0\right) =0,w_{B}\left( 1\right) =1,w_{B}\left( 1+v\right)
=2,w_{B}\left( v\right) =2\text{.}\
\end{equation*}

The weight of a codeword is the sum of the weights of its components. The
minimum Hamming, Lee and Bachoc weights, $d_{H},\ d_{L}$ and $d_{B}$ of $C$
are the smallest Hamming, Lee and Bachoc weights among the non-zero
codewords of $C$, respectively.

Let $x=\left( x_{1},x_{2},\ldots ,x_{n}\right) $ and $y=\left(
y_{1},y_{2},\ldots ,y_{n}\right) $ be two elements of $\left( \mathbb{F}%
_{2}+v\mathbb{F}_{2}\right) ^{n}$. The Hermitian inner product is defined as
$\left\langle x,y\right\rangle _{H}=\sum x_{i}\overline{y_{i}}$ where $%
\overline{0}=0,\ \overline{1}=1,\ \overline{v}=1+v$ and$\ \overline{1+v}=v$.
The dual code $C^{\ast }$ with respect to the Hermitian inner product of $C$
is defined as
\begin{equation*}
C^{\ast }=\left\{ x\in \left( \mathbb{F}_{2}+v\mathbb{F}_{2}\right) ^{n}\mid
\left\langle x,y\right\rangle _{H}=0\text{ for all }y\in C\right\} .
\end{equation*}%
$C$ is Hermitian self-dual if $C=C^{\ast }$.

The following theorems, taken from \cite{Zhu} characterize the structure of
cyclic codes over the ring $\mathbb{F}_2+v\mathbb{F}_2$:

\begin{theorem}
\cite{Zhu} \label{cyclic} For any cyclic code $C$ of length $n$ over $%
\mathbb{F}_{2}+v\mathbb{F}_{2}$, there is a unique polynomial $g\left(
x\right) $ such that $C=\left\langle g\left( x\right) \right\rangle $, and $%
g\left( x\right) \mid x^{n}-1$.
\end{theorem}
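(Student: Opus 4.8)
The plan is to exploit the decomposition of the coefficient ring that already underlies Lemma~\ref{idem}: since $v^{2}=v$, the elements $1-v$ and $v$ are orthogonal idempotents with $(1-v)+v=1$ and $(1-v)v=0$, so the assignment $a+bv\mapsto(a,a+b)$ identifies $\mathbb{F}_{2}+v\mathbb{F}_{2}$ with $\mathbb{F}_{2}\times\mathbb{F}_{2}$. First I would promote this to a ring isomorphism
\[
\Phi:R_{2,n}=(\mathbb{F}_{2}+v\mathbb{F}_{2})[x]/(x^{n}-1)\;\longrightarrow\;\big(\mathbb{F}_{2}[x]/(x^{n}-1)\big)\times\big(\mathbb{F}_{2}[x]/(x^{n}-1)\big),
\]
sending $c(x)=(1-v)f(x)+vh(x)$ to $(f(x),h(x))$; that $\Phi$ respects multiplication is exactly the vanishing of the cross terms coming from $(1-v)v=0$, the same computation carried out in the proof of Lemma~\ref{idem}. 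Under $\Phi$ a cyclic code $C$, being an ideal of $R_{2,n}$, corresponds to an ideal of the product ring, and every ideal of a product ring splits as $I_{1}\times I_{2}$ with $I_{j}$ an ideal of $\mathbb{F}_{2}[x]/(x^{n}-1)$.

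Next I would invoke the classical structure of cyclic codes over the field $\mathbb{F}_{2}$: each $I_{j}$ is principal, generated by the unique divisor $g_{j}(x)$ of $x^{n}-1$ that it contains (over $\mathbb{F}_{2}$ every nonzero polynomial is monic, so no normalization is needed and this generator is genuinely unique). Setting $g(x)=(1-v)g_{1}(x)+vg_{2}(x)=\Phi^{-1}(g_{1},g_{2})$ then produces a generator of $C$, since $\Phi(\langle g\rangle)=\langle g_{1}\rangle\times\langle g_{2}\rangle=I_{1}\times I_{2}=\Phi(C)$. To obtain the divisibility $g(x)\mid x^{n}-1$ in $(\mathbb{F}_{2}+v\mathbb{F}_{2})[x]$, I would write $x^{n}-1=g_{j}(x)h_{j}(x)$ in $\mathbb{F}_{2}[x]$ and check, again using $(1-v)v=0$, that
\[
g(x)\big[(1-v)h_{1}(x)+vh_{2}(x)\big]=(1-v)(x^{n}-1)+v(x^{n}-1)=x^{n}-1.
\]
This establishes existence.

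For uniqueness I would suppose that $g'(x)$ also generates $C$ and divides $x^{n}-1$ in $(\mathbb{F}_{2}+v\mathbb{F}_{2})[x]$. Applying the two coordinate projections $\phi_{1},\phi_{2}$ (the ring homomorphisms onto $\mathbb{F}_{2}[x]/(x^{n}-1)$ induced by $\Phi$) to a factorization $x^{n}-1=g'(x)k(x)$ shows that each $\phi_{j}(g')$ divides $x^{n}-1$; and since $g'$ generates $C$, each $\phi_{j}(g')$ generates the $j$-th component ideal $I_{j}$. By the uniqueness of the divisor-generator over $\mathbb{F}_{2}$ one then gets $\phi_{j}(g')=g_{j}$ for $j=1,2$, whence $g'=(1-v)g_{1}+vg_{2}=g$.

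I expect the main obstacle to be one of careful bookkeeping rather than of depth. One must read the assertion ``$g\mid x^{n}-1$'' in the polynomial ring $(\mathbb{F}_{2}+v\mathbb{F}_{2})[x]$, where it is meaningful, and not in the quotient $R_{2,n}$, where $x^{n}-1=0$ renders it vacuous; and one must then verify that this divisibility is precisely what forces the two coordinate polynomials to be the canonical divisors of $x^{n}-1$, which is what pins $g$ down uniquely. The orthogonality $(1-v)v=0$ is the single computational fact that makes both the reconstruction of a divisor and the clean splitting of ideals go through, so I would make sure each of these steps is reduced to it.
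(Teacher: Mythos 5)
Your proof is correct, but there is nothing in the paper to compare it against: Theorem~\ref{cyclic} is quoted from \cite{Zhu} without proof, so you have in effect reconstructed the argument of the cited source. Your route---splitting $R_{2,n}$ by the orthogonal idempotents $v$ and $1-v$ (which equals $1+v$ in characteristic $2$) into two copies of $\mathbb{F}_{2}[x]/(x^{n}-1)$, taking the unique divisor-generators $g_{1},g_{2}$ of the component ideals, and verifying that $g=(1-v)g_{1}+vg_{2}$ both generates $C$ and divides $x^{n}-1$ via the vanishing cross terms---is essentially the proof in \cite{Zhu}, where cyclic codes are written as $(1+v)C_{1}\oplus vC_{2}$ with $C_{1},C_{2}$ binary cyclic codes; it is also the decomposition the present paper itself uses in Lemma~\ref{idem} and throughout Section~4. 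Two details you handle correctly and that are worth keeping explicit: first, the divisibility $g\mid x^{n}-1$ must be read in $(\mathbb{F}_{2}+v\mathbb{F}_{2})[x]$, not in $R_{2,n}$ where it is vacuous, and it is exactly this reading that forces $\phi_{j}(g')=g_{j}$ and hence uniqueness (the preimage argument $\langle d\rangle+\langle x^{n}-1\rangle=\langle d\rangle$ for $d\mid x^{n}-1$ over $\mathbb{F}_{2}$, where all nonzero polynomials are monic, is the right justification); second, your argument needs no hypothesis that $n$ be odd, since $\mathbb{F}_{2}[x]/(x^{n}-1)$ is a principal ideal ring for every $n$ as a quotient of the PID $\mathbb{F}_{2}[x]$---in contrast to the idempotent-generator statement of Theorem~\ref{idempotent}, which genuinely requires $n$ odd.
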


\begin{corollary}
\cite{Zhu} Every ideal of $R_{2,n}=(\mathbb{F}_{2}+v\mathbb{F}%
_{2})[x]/(x^{n}-1)$ is principal.
\end{corollary}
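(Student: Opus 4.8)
The plan is to reduce the statement to Theorem \ref{cyclic} by invoking the standard dictionary between cyclic codes and ideals of a polynomial quotient ring; once that dictionary is in place, the corollary follows with essentially no further work.

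First I would recall the identification underlying that dictionary. After associating to each codeword $\left(c_0, c_1, \ldots, c_{n-1}\right)$ the residue class of $c_0 + c_1 x + \cdots + c_{n-1} x^{n-1}$ in $R_{2,n}$, the cyclic shift on $\left(\mathbb{F}_2 + v\mathbb{F}_2\right)^n$ corresponds precisely to multiplication by $x$. Therefore a $\left(\mathbb{F}_2 + v\mathbb{F}_2\right)$-submodule of $\left(\mathbb{F}_2 + v\mathbb{F}_2\right)^n$ is invariant under the cyclic shift if and only if it is closed under multiplication by $x$, and this---combined with its module structure---is exactly the defining condition of an ideal of $R_{2,n}$. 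Hence ideals of $R_{2,n}$ and cyclic codes of length $n$ over $\mathbb{F}_2 + v\mathbb{F}_2$ are literally the same objects.

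With this correspondence established, the argument is immediate. Let $I$ be an arbitrary ideal of $R_{2,n}$. By the identification above, $I$ is a cyclic code of length $n$ over $\mathbb{F}_2 + v\mathbb{F}_2$, so Theorem \ref{cyclic} supplies a unique polynomial $g(x)$ with $I = \langle g(x) \rangle$ and $g(x) \mid x^n - 1$. Thus $I$ is generated by the single element $g(x)$, i.e. $I$ is principal. The main obstacle is effectively nonexistent, the result being a direct corollary of Theorem \ref{cyclic}; the only point worth noting explicitly is that the cyclic-code/ideal correspondence is exhaustive on both sides, so that Theorem \ref{cyclic}, although phrased in terms of cyclic codes, genuinely applies to every ideal of $R_{2,n}$ without exception.
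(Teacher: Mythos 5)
Your proposal is correct and matches the paper's intent exactly: the paper states this as an unproved corollary of Theorem~\ref{cyclic} (citing \cite{Zhu}), and the standard identification of cyclic codes of length $n$ over $\mathbb{F}_2+v\mathbb{F}_2$ with ideals of $R_{2,n}$ is precisely the step that makes the deduction immediate. Your explicit spelling-out of that correspondence is a sound filling-in of what the paper leaves tacit, with nothing done differently.
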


\begin{theorem}
\cite{Zhu} \label{idempotent} If $n$ is odd then every cyclic code over $%
\mathbb{F}_{2}+v\mathbb{F}_{2}$ has a unique idempotent generator.
\end{theorem}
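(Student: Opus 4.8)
The plan is to reduce the statement to the classical fact about cyclic codes over the field $\F_2$ by exploiting the idempotent decomposition of the ground ring. Since $v$ and $1-v$ are orthogonal idempotents with $v+(1-v)=1$ and $v(1-v)=0$, every element of $R_{2,n}$ can be written uniquely as $(1-v)a(x)+vb(x)$ with $a,b\in A:=\F_2[x]/(x^n-1)$, and the assignment $(1-v)a+vb\mapsto(a,b)$ is a ring isomorphism $R_{2,n}\cong A\times A$. I would first record that, under this isomorphism, an ideal $C$ of $R_{2,n}$ splits as a product of ideals of $A$; concretely, setting $C_1=\{a\in A : (1-v)a\in C\}$ and $C_2=\{b\in A : vb\in C\}$, one checks that $C_1,C_2$ are cyclic codes over $\F_2$ and that $C=(1-v)C_1\oplus vC_2$. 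The inclusion $(1-v)C_1+vC_2\subseteq C$ is immediate from the definitions, and for the reverse inclusion one writes $c=(1-v)a+vb\in C$ and multiplies by the idempotents to get $(1-v)c=(1-v)a\in C$ and $vc=vb\in C$, so $a\in C_1$ and $b\in C_2$; directness follows from $(1-v)A\cap vA=0$.

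Next I would invoke the theory of cyclic codes over a field. Because $n$ is odd we have $\gcd(n,2)=1$, so $x^n-1$ is separable over $\F_2$ and $A$ is semisimple. Hence every ideal of $A$ is a direct summand and therefore possesses an idempotent generator, and that idempotent is unique: if $e,e'$ both generate an ideal $I$, then writing $e=e'g$ yields $e'e=e'^2g=e'g=e$ and symmetrically $ee'=e$, forcing $e=e'$. Let $e_1,e_2\in A$ denote the unique idempotent generators of $C_1,C_2$.

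I would then assemble the generator. Put $g=(1-v)e_1+ve_2$. By Lemma~\ref{idem}, $g$ is an idempotent of $R_{2,n}$ since $e_1,e_2$ are idempotents in $A$. Transporting through the isomorphism, $g\mapsto(e_1,e_2)$, whose generated ideal in $A\times A$ is $\langle e_1\rangle\times\langle e_2\rangle=C_1\times C_2$, corresponding to $(1-v)C_1\oplus vC_2=C$; thus $\langle g\rangle=C$. For uniqueness, suppose $g'$ is another idempotent generator of $C$. By Lemma~\ref{idem} it has the form $g'=(1-v)f_1+vf_2$ with $f_1,f_2$ idempotents of $A$, and multiplying $\langle g'\rangle=C$ by $1-v$ and by $v$ separates the two coordinates to give $\langle f_1\rangle=C_1$ and $\langle f_2\rangle=C_2$. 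The uniqueness in the field case then forces $f_1=e_1$ and $f_2=e_2$, so $g'=g$.

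The only real content lies in the reduction step together with the appeal to semisimplicity; once $C$ is broken into its two $\F_2$-components the verification that $g$ is an idempotent generator is immediate from Lemma~\ref{idem}. The main obstacle I anticipate is establishing cleanly that an \emph{arbitrary} ideal of $R_{2,n}$ genuinely decomposes as $C=(1-v)C_1\oplus vC_2$ with the $C_i$ as defined, rather than merely containing such a sum; this is exactly where one uses that $1-v$ and $v$ belong to $R_{2,n}$ and that $C$, being an ideal, is closed under multiplication by them.
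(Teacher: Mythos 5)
The paper does not actually prove this statement: it is quoted from \cite{Zhu}, so there is no internal proof to compare yours against line by line. Judged on its own, your argument is correct and essentially complete: the splitting $R_{2,n}\cong A\times A$ via the orthogonal idempotents $v$ and $1-v$ (note $1-v=1+v$ in characteristic $2$, matching the paper's notation in Section 4), the verification that an arbitrary ideal decomposes as $C=(1-v)C_1\oplus vC_2$ by multiplying codewords by the two idempotents, the appeal to semisimplicity of $A=\mathbb{F}_2[x]/(x^n-1)$ when $n$ is odd, and the transfer of existence and uniqueness through the isomorphism are all sound, and you correctly isolate where the hypothesis that $n$ is odd enters. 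This is also exactly the machinery the paper itself builds around the cited theorem: Lemma \ref{idem} is the idempotent-classification ingredient you invoke, and the corollary immediately following the theorem (that the generator has the form $(1+v)f+vh$ with $f,h$ idempotent in $\mathbb{F}_2[x]/(x^q-1)$) is subsumed by your construction rather than deduced afterwards, so your proof in effect makes the citation self-contained in the paper's own terms.

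One small repair is needed in your uniqueness argument over the field: the identity obtained ``symmetrically'' should read $ee'=e'$ (writing $e'=eh$ gives $ee'=e^2h=eh=e'$), not $ee'=e$. As literally written you have derived $e'e=e$ twice, which in a commutative ring is a single equation and does not force $e=e'$. With the corrected pair $e'e=e$ and $ee'=e'$, commutativity of $A$ yields $e=e'$, and the rest of your proof goes through unchanged.
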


\begin{corollary}
Any cylic code $C$ of odd length $q$ over $\mathbb{F}_{2}+v\mathbb{F}_{2}$
has a unique idempotent generator of the form $\left( 1+v\right) f+vh$ where
$f$ and $h$ are idempotents in $\mathbb{F}_{2}\left[ x\right] /\left(
x^{q}-1\right) $.
\end{corollary}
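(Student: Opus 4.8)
The plan is to combine Theorem~\ref{idempotent} with Lemma~\ref{idem}, specialized to $p=2$ and $n=q$. Since $q$ is odd, Theorem~\ref{idempotent} already guarantees that $C$ has a unique idempotent generator $g\left(x\right)\in R_{2,q}$; the remaining task is therefore to identify the shape of $g$ and to check that this shape is uniquely determined.

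For the existence of the asserted form, I would invoke Lemma~\ref{idem} with $p=2$: every idempotent of $R_{2,q}$ has the shape $\left(1-v\right)f+vh$, where $f$ and $h$ are idempotents in $\mathbb{F}_{2}\left[x\right]/\left(x^{q}-1\right)$. As we are in characteristic $2$ we have $1-v=1+v$, so $g=\left(1+v\right)f+vh$, which is precisely the form claimed.

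For uniqueness of $f$ and $h$, I would recover them from $g$ using the orthogonal idempotents $v$ and $1+v$ of the base ring. In characteristic $2$ one has $v^{2}=v$, $\left(1+v\right)^{2}=1+v$ and $v\left(1+v\right)=v+v^{2}=0$, from which a direct computation yields $vg=vh$ and $\left(1+v\right)g=\left(1+v\right)f$. Since multiplication by $v$, respectively by $1+v$, is injective on the subring $\mathbb{F}_{2}\left[x\right]/\left(x^{q}-1\right)$, the elements $h$ and $f$ are then determined by $g$. Equivalently, the assignment $a+bv\mapsto\left(a,a+b\right)$ is a ring isomorphism $R_{2,q}\cong\left(\mathbb{F}_{2}\left[x\right]/\left(x^{q}-1\right)\right)^{2}$ under which $\left(1+v\right)f+vh\mapsto\left(f,h\right)$; being a bijection, it sends $g$ to a single pair $\left(f,h\right)$, so $f$ and $h$ are uniquely determined.

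The argument is essentially bookkeeping, with both existence statements immediate from the quoted results; the only step that demands a little care is the passage from uniqueness of the idempotent generator $g$ to uniqueness of its coefficient idempotents $f$ and $h$, which is exactly what the isomorphism (or the injectivity of multiplication by $v$ and $1+v$) supplies. The characteristic-$2$ identity $1-v=1+v$, reconciling the shape appearing in Lemma~\ref{idem} with that in the statement, is the small cosmetic point to keep in view.
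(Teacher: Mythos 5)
Your proof is correct and follows essentially the same route as the paper, which also derives the corollary immediately from Theorem~\ref{idempotent} and Lemma~\ref{idem}, with $1-v=1+v$ in characteristic $2$. The extra verification that $f$ and $h$ are recoverable from $g$ via the orthogonal idempotents $v$ and $1+v$ is a sound elaboration of a point the paper leaves implicit, not a different argument.
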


\begin{proof}
Since $q$ is odd this is an immediate consequence of Theorem \ref{idempotent}
and Lemma \ref{idem}.
\end{proof}

\section{Quadratic Residue Codes over $\mathbb{F}_{p}+v\mathbb{F}_{p}$}

Let $q$ be a prime such that $p$ is a quadratic residue in $\mathbb{F}_{q}$.
So, QR-codes of length $q$ over $\mathbb{F}_{p}$ exist. For the idempotent
generators of these codes we refer to \cite{Macwilliams}. Let $a$ and $b$ be
the idempotent generators of $\left[ q,\frac{q+1}{2}\right] $ QR-codes over $%
\mathbb{F}_{p}$ and $a^{\prime },b^{\prime }$ be the idempotent generators
of $\left[ q,\frac{q-1}{2}\right] $ QR-codes over $\mathbb{F}_{p}$.

\begin{definition}
\label{defQR} Let $q$ be a prime such that $p$ is a quadratic residue modulo
$q$. Set $Q_{1}=\left\langle \left( 1-v\right) a+vb\right\rangle $, $%
Q_{2}=\left\langle \left( 1-v\right) b+va\right\rangle $ and $\left\langle
Q_{1}^{\prime }=\left( 1-v\right) a^{\prime }+vb^{\prime }\right\rangle $, $%
Q_{2}^{\prime }=\left\langle \left( 1-v\right) b^{\prime }+va^{\prime
}\right\rangle $. These four codes are called quadratic residue codes over $%
\mathbb{F}_{p}+v\mathbb{F}_{p}$ of length $q$.
\end{definition}

As in the case of QR-codes over finite fields, the properties of QR-codes
over $\mathbb{F}_{p}+v\mathbb{F}_{p}$ differ for the cases $q\equiv 3\pmod{4}
$ and $q\equiv 1\pmod{4}$.

\subsection{Case I}

If $q\equiv 3\pmod{4}$ then the codes have the following properties.

\begin{theorem}
\label{bQR} With the notation as in the Definition \ref{defQR}, the
following hold for $(\mathbb{F}_{p}+v\mathbb{F}_{p})$-QR codes:

\item[a)] $Q_{1}$ and $Q_{1}^{\prime }$ are equivalent to $Q_{2}$ and $%
Q_{2}^{\prime }$, respectively;

\item[b)] $Q_{1}\cap Q_{2}=\left\langle h\right\rangle $ and $%
Q_{1}+Q_{2}=R_{p,q}$ where $h=1+e_{1}+e_{2}$ is the polynomial corresponding
to the all one vector of length $q$;

\item[c)] $\left\vert Q_{1}\right\vert =p^{\left( q+1\right) }=\left\vert
Q_{2}\right\vert ;$

\item[d)] $Q_{1}=Q_{1}^{\prime }+\left\langle h\right\rangle ,\
Q_{2}=Q_{2}^{\prime }+\left\langle h\right\rangle ;$

\item[e)] $\left\vert Q_{1}^{\prime }\right\vert =p^{\left( q-1\right)
}=\left\vert Q_{2}^{\prime }\right\vert ;$

\item[f)] $Q_{1}^{\prime }$ and $Q_{2}^{\prime }$ are self-orthogonal and $%
Q_{1}^{\bot }=Q_{1}^{\prime }$ and $Q_{2}^{\bot }=Q_{2}^{\prime };$

\item[g)] $Q_{1}^{\prime }\cap Q_{2}^{\prime }=\left\{ 0\right\} $ and $%
Q_{1}^{\prime }+Q_{2}^{\prime }=\left\langle 1+h\right\rangle .$
\end{theorem}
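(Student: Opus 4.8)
The plan is to reduce every assertion to a known fact about quadratic residue codes over the field $\mathbb{F}_{p}$, exploiting that each generator has the shape $(1-v)f+vh$ with $f,h$ field idempotents and that $v(1-v)=0$. Writing $A=\mathbb{F}_{p}[x]/(x^{q}-1)$, I would first record the field-level identities I will lean on: with $a,b$ the idempotents of the two $[q,(q+1)/2]$ QR codes and $a',b'$ those of the $[q,(q-1)/2]$ codes, one has $ab=q^{-1}h$, $a+b-ab=1$, $a'=a-q^{-1}h$, $b'=b-q^{-1}h$, $a(1)=b(1)=1$, $h^{2}=qh$, and --- this is where $q\equiv 3\pmod 4$ enters --- $\mu_{-1}(a)=b$, $\mu_{-1}(b)=a$ (because $-1$ is then a non-residue), together with the dual relations $\langle a\rangle^{\perp}=\langle a'\rangle$, $\langle b\rangle^{\perp}=\langle b'\rangle$. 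I would also use the ring isomorphism $R_{p,q}\cong A\times A$, $(1-v)f+vg\mapsto(f,g)$, under which $Q_{1}\mapsto\langle a\rangle\times\langle b\rangle$, $Q_{1}'\mapsto\langle a'\rangle\times\langle b'\rangle$, and $\langle h\rangle\mapsto\langle h\rangle\times\langle h\rangle$.

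For (a) I would apply the automorphism $\mu_{n}$ for a non-residue $n$: it fixes $\mathbb{F}_{p}+v\mathbb{F}_{p}$ and only permutes the coordinate exponents, so it is a code equivalence, and from $\mu_{n}(a)=b$, $\mu_{n}(b)=a$ (and likewise for $a',b'$) it carries $(1-v)a+vb$ to $(1-v)b+va$, giving $Q_{1}\sim Q_{2}$ and $Q_{1}'\sim Q_{2}'$. Parts (c), (d), (e) I would read off the product decomposition: (c) and (e) are the size computations $p^{(q+1)/2}\cdot p^{(q+1)/2}=p^{q+1}$ and $p^{(q-1)/2}\cdot p^{(q-1)/2}=p^{q-1}$, while (d) follows coordinatewise from $\langle a'\rangle+\langle h\rangle=\langle a\rangle$ and $\langle b'\rangle+\langle h\rangle=\langle b\rangle$, since $Q_{1}'+\langle h\rangle\mapsto(\langle a'\rangle+\langle h\rangle)\times(\langle b'\rangle+\langle h\rangle)=\langle a\rangle\times\langle b\rangle$.

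For (b) and (g) I would compute idempotents with the quoted intersection/sum theorem. Expanding $[(1-v)a+vb][(1-v)b+va]$ and using $v(1-v)=0$, $(1-v)^{2}=1-v$, $v^{2}=v$ collapses everything to $ab$, so $Q_{1}\cap Q_{2}=\langle ab\rangle=\langle q^{-1}h\rangle=\langle h\rangle$, while the sum idempotent $a+b-ab=1$ gives $Q_{1}+Q_{2}=R_{p,q}$. The same expansion for the primed generators yields intersection idempotent $a'b'$ and sum idempotent $a'+b'$; substituting $a'=a-q^{-1}h$, $b'=b-q^{-1}h$ and simplifying with $h^{2}=qh$, $a(1)=b(1)=1$ gives $a'b'=0$ (hence $Q_{1}'\cap Q_{2}'=\{0\}$) and $a'+b'=1-q^{-1}h$, the even-like idempotent generating the code named in (g) (for $p=2$ this is literally $1+h$). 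For (f) I would invoke the dual-idempotent theorem: the idempotent of $Q_{1}^{\perp}$ is $1-g(x^{-1})$ with $g=(1-v)a+vb$; since $a(x^{-1})=\mu_{-1}(a)=b$ and $b(x^{-1})=a$, this equals $(1-v)(1-b)+v(1-a)=(1-v)a'+vb'$ after using $1-b=a'$, $1-a=b'$, i.e. exactly the generator of $Q_{1}'$, so $Q_{1}^{\perp}=Q_{1}'$ (and $Q_{2}^{\perp}=Q_{2}'$ symmetrically) by uniqueness of the idempotent generator for odd length. Self-orthogonality of $Q_{1}'$ then follows from $Q_{1}'\subseteq Q_{1}=(Q_{1}')^{\perp}$ via (d) and $C^{\perp\perp}=C$.

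The main obstacle I anticipate is pinning down the field-level duality exactly: the identities $1-b=a'$, $1-a=b'$ and the swap $\mu_{-1}(a)=b$ are precisely the points where $q\equiv 3\pmod 4$ is indispensable, and one must keep the labelling of $a,b,a',b'$ compatible so that $Q_{1}^{\perp}$ lands on $Q_{1}'$ rather than on a permuted partner. Everything else is routine bookkeeping once the cross terms $v(1-v)=0$ are handled consistently and the uniqueness of idempotent generators is used to pass from ``equal idempotents'' to ``equal codes.''
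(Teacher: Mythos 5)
Your proposal is correct and takes essentially the same route as the paper's proof: equivalence via $\mu_{n}$ for a non-residue $n$, the idempotent product/sum theorem for parts (b), (d), (g), the dual-idempotent theorem together with $-1\in N_{q}$ for part (f), and cardinality counts for (c), (e) — your CRT decomposition $R_{p,q}\cong A\times A$ is only a bookkeeping repackaging of the paper's direct computations with $v(1-v)=0$. Your more careful normalization ($ab=q^{-1}h$, $a'+b'=1-q^{-1}h$) actually tightens the paper's version, which writes $ab=h$ and $a'+b'=1+h$; those literal identities hold only when $q\equiv -1\pmod{p}$ (e.g.\ $p=2$), though since $q$ is a unit in $\mathbb{F}_{p}$ one has $\left\langle q^{-1}h\right\rangle =\left\langle h\right\rangle$, so the stated conclusions are unaffected where it matters.
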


\begin{proof}

\item[a)] Let $n\in N_{q}$ then $\mu _{n}a=b$ and $\mu _{n}a^{\prime
}=b^{\prime }$ therefore
\begin{equation*}
\mu _{n}\left[ \left( 1-v\right) a+vb\right] =\left( 1-v\right) b+va
\end{equation*}%
so $Q_{1}$ and $Q_{2}$ are equivalent. Similarly,
\begin{equation*}
\mu _{n}\left[ \left( 1-v\right) a^{\prime }+vb^{\prime }\right] =\left(
1-v\right) b^{\prime }+va^{\prime }
\end{equation*}%
which implies $Q_{1}^{\prime }$ and $Q_{2}^{\prime }$ are equivalent.

\item[b)] $Q_{1}\cap Q_{2}$ has idempotent generator%
\begin{equation*}
\left[ \left( 1-v\right) a+vb\right] \left[ \left( 1-v\right) b+va\right]
=\left( 1-v\right) ab+vab=ab=h
\end{equation*}%
and $Q_{1}+Q_{2}$ has idempotent generator $\left( 1-v\right) a+vb+\left(
1-v\right) b+va-ab=a+b-ab=1$.

\item[c)] By a) and b) we have
\begin{equation*}
\left( p^{2}\right) ^{q}=\left\vert Q_{1}+Q_{2}\right\vert =\frac{\left\vert
Q_{1}\right\vert \left\vert Q_{2}\right\vert }{\left\vert Q_{1}\cap
Q_{2}\right\vert }=\frac{\left\vert Q_{1}\right\vert ^{2}}{p^{2}},
\end{equation*}%
so $\left\vert Q_{1}\right\vert =\left\vert Q_{2}\right\vert =p^{\left(
q+1\right) }$.

\item[d)] $Q_{1}^{\prime }+\left\langle h\right\rangle $ has idempotent
generator
\begin{eqnarray*}
&&\left( 1-v\right) a^{\prime }+vb^{\prime }+h-\left[ \left( 1-v\right)
a^{\prime }+vb^{\prime }\right] h \\
&=&\left( 1-v\right) \left[ a^{\prime }+h-a^{\prime }h\right] +v\left[
b^{\prime }+h-b^{\prime }h\right] =\left( 1-v\right) a+vb
\end{eqnarray*}
Hence, $Q_{1}^{\prime }+\left\langle h\right\rangle =Q_{1}$. Similarly, $%
Q_{2}=Q_{2}^{\prime }+\left\langle h\right\rangle $.

\item[e)] $p^{\left( q+1\right) }=\left\vert Q_{1}\right\vert =\left\vert
Q_{1}^{\prime }+\left\langle h\right\rangle \right\vert =\left\vert
Q_{1}^{\prime }\right\vert \left\vert \left\langle h\right\rangle
\right\vert =p^{2}\left\vert Q_{1}^{\prime }\right\vert $. Thus $\left\vert
Q_{1}^{\prime }\right\vert =p^{\left( q-1\right) }$.

\item[f)] $Q_{1}^{\bot }$ has idempotent generator $1-\left( \left(
1-v\right) a\left( x^{-1}\right) +vb\left( x^{-1}\right) \right) $ and since
$-1\in N_{q},\ a\left( x^{-1}\right) =b$ and $b\left( x^{-1}\right) =a$. So
we obtain%
\begin{eqnarray*}
1-\left( \left( 1-v\right) a\left( x^{-1}\right) +vb\left( x^{-1}\right)
\right) &=&1-v+\left( 1-v\right) a\left( x^{-1}\right) +v-vb\left(
x^{-1}\right) \\
&=&\left( 1-v\right) \left[ 1-a\left( x^{-1}\right) \right] +v\left[
1-b\left( x^{-1}\right) \right] \\
&=&\left( 1-v\right) a^{\prime }+vb^{\prime }\text{.}
\end{eqnarray*}%
which implies $Q_{1}^{\bot }=Q_{1}^{\prime }$. Similarly, $Q_{2}^{\bot
}=Q_{2}^{\prime }$. By d) $Q_{1}^{\prime }\subset Q_{1}$ and $Q_{2}^{\prime
}\subset Q_{2}$ it follows that they are self orthogonal.

\item[g)] $Q_{1}^{\prime }\cap Q_{2}^{\prime }$ has idempotent generator
\begin{eqnarray*}
&&\left[ \left( 1-v\right) a^{\prime }+vb^{\prime }\right] \left[ \left(
1-v\right) b^{\prime }+va^{\prime }\right] \\
&=&\left( 1-v\right) a^{\prime }b^{\prime }+va^{\prime }b^{\prime } \\
&=&a^{\prime }b^{\prime }=0.
\end{eqnarray*}%
$Q_{1}^{\prime }+Q_{2}^{\prime }$ has idempotent generator $\left(
1-v\right) a^{\prime }+vb^{\prime }+\left( 1-v\right) b^{\prime }+va^{\prime
}-0=a^{\prime }+b^{\prime }=1+h$.
\end{proof}

In the following, the extended QR-codes over $\mathbb{F}_{p}+v\mathbb{F}_{p}$
are formed by adding the same columns that are used to extend QR-codes over $%
\mathbb{F}_{p}$.

\begin{theorem}
\label{43}Suppose $q\equiv 3\pmod{4}$ and $Q_{1},Q_{2}$ are $\mathbb{F}_{p}+v%
\mathbb{F}_{p}$-QR codes as given in Definition \ref{defQR}. Then $\overline{%
Q_{1}}$ and $\overline{Q_{2}}$ are self-dual.
\end{theorem}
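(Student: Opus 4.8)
The plan is to prove self-duality by establishing the single inclusion $\overline{Q_1}\subseteq\overline{Q_1}^{\bot}$ and then upgrading it to equality by a cardinality count. Since the extension appends one coordinate that is a fixed linear function of the others, the extension map is injective on codewords, so $\overline{Q_1}$ has length $q+1$ and $\lvert\overline{Q_1}\rvert=\lvert Q_1\rvert=p^{q+1}$ by Theorem \ref{bQR}(c). The coefficient ring $\mathbb{F}_p+v\mathbb{F}_p$ is Frobenius, so $\lvert\overline{Q_1}\rvert\,\lvert\overline{Q_1}^{\bot}\rvert=(p^{2})^{q+1}$, forcing $\lvert\overline{Q_1}^{\bot}\rvert=p^{q+1}=\lvert\overline{Q_1}\rvert$. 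Hence it will suffice to show that $\overline{Q_1}$ is self-orthogonal, i.e. that any two of its codewords are Euclidean-orthogonal.

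First I would compute the inner product on $Q_1$ itself, before extension, using the structure already in hand. By Theorem \ref{bQR}(d) every codeword $c$ of $Q_1$ can be written $c=u+\rho h$ with $u\in Q_1'$ and $\rho h\in\langle h\rangle$, where $h$ is the all-one vector. For $c=u+\rho h$ and $c'=u'+\rho' h$ I would expand $\langle c,c'\rangle_E$ into four terms. The term $\langle u,u'\rangle_E$ vanishes because $Q_1'$ is self-orthogonal (Theorem \ref{bQR}(f)); the cross terms $\langle u,h\rangle_E$ and $\langle h,u'\rangle_E$ vanish because $h\in Q_1\cap Q_2\subseteq Q_1$ by Theorem \ref{bQR}(b), while $u,u'\in Q_1'=Q_1^{\bot}$ by (f). What remains is $\langle c,c'\rangle_E=\rho\rho'\langle h,h\rangle_E=\rho\rho'\,q$, and in the same way the coordinate sum of $c$ is $\langle c,h\rangle_E=\rho\,q$.

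It then remains to check that the appended coordinate cancels the residual $\rho\rho' q$. The classical self-dual extension of the $\mathbb{F}_p$-QR code appends $\gamma\sum_i c_i$ for a fixed scalar $\gamma\in\mathbb{F}_p$, and the same column is used over the ring, so the extended inner product becomes $\langle\overline{c},\overline{c'}\rangle_E=\rho\rho' q+(\gamma\rho q)(\gamma\rho' q)=\rho\rho' q\,(1+\gamma^{2}q)$. Thus self-orthogonality is equivalent to the existence of $\gamma\in\mathbb{F}_p$ with $\gamma^{2}q\equiv-1$, i.e. to $-q$ being a quadratic residue modulo $p$. This is the step I expect to carry the real content. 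I would derive it from the hypotheses by quadratic reciprocity (the case $p=2$ being immediate, since $-q\equiv1$): because $q\equiv3\pmod 4$ makes $(q-1)/2$ odd and $p$ is a quadratic residue modulo $q$, reciprocity gives $\left(\frac{q}{p}\right)=\left(\frac{-1}{p}\right)$, whence $\left(\frac{-q}{p}\right)=\left(\frac{-1}{p}\right)\left(\frac{q}{p}\right)=1$. So the required $\gamma$ exists in $\mathbb{F}_p\subseteq\mathbb{F}_p+v\mathbb{F}_p$, the extension column makes $\overline{Q_1}$ self-orthogonal over the ring, and the cardinality count of the first paragraph then gives self-duality. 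The identical argument with the roles of $a$ and $b$ interchanged settles $\overline{Q_2}$.
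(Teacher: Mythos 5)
Your proof is correct and follows essentially the same route as the paper: both arguments reduce self-duality to self-orthogonality plus a cardinality count, using $Q_1=Q_1'+\left\langle h\right\rangle$, $Q_1^{\bot}=Q_1'$, and quadratic reciprocity to produce $r\in\mathbb{F}_p$ with $r^{2}=-q$ (your $\gamma$ satisfying $\gamma^{2}q\equiv-1$ is just $r/q$, so your appended coordinate $\gamma\sum_i c_i$ reproduces exactly the paper's extension column with $0$ on the $G_1'$ rows and $r$ on the all-one row). The only differences are cosmetic: you check orthogonality on arbitrary codewords $u+\rho h$ rather than on generator-matrix rows, and you handle $p=2$ explicitly, which the paper's case split on $p\bmod 4$ silently omits.
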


\begin{proof}
If $p\equiv 1\pmod{4}$ then both $-1$ and $q$ are quadratic residues in $%
\mathbb{F}_{p}$. If $p\equiv 3\pmod{4}$ then both $-1$ and $q$ are
non-residues in $\mathbb{F}_{p}$. These follow easily from the quadratic
reciprocity law. So, in any case $-q$ is a quadratic residue in $\mathbb{F}%
_{p}$, i.e. there exists $r\in \mathbb{F}_{p}$ such that $r^{2}=-q$. By
Theorem \ref{bQR}, $Q_{1}=Q_{1}^{\prime }+\left\langle h\right\rangle $. Let
$\overline{Q_{1}}$ be the code generated by%
\begin{equation*}
\begin{array}{cccccc}
\infty & 0 & 1 & 2 & \cdots & q-1%
\end{array}%
\end{equation*}%
\begin{equation}
\left(
\begin{array}{cccccc}
0 &  &  &  &  &  \\
0 &  &  & G_{1}^{\prime } &  &  \\
\vdots &  &  &  &  &  \\
r & 1 & 1 & 1 & \cdots & 1%
\end{array}%
\right)  \label{gen}
\end{equation}%
where $G_{1}^{\prime }$ is a generator matrix for $Q_{1}^{\prime }$ and $r$
is an element with $r^{2}\equiv -q\pmod{p}$. Note that the all $1$ vector is
in $Q_{1}$. Hence, since $Q_{1}^{\bot }=Q_{1}^{\prime }$, the last row is
orthogonal to the rows above; moreover it is orthogonal to itself, making
the code $\overline{Q_{1}}$ self-orthogonal. By comparing $\left\vert
\overline{Q_{1}}\right\vert $ and $\left\vert \overline{Q_{1}}^{\bot
}\right\vert $ it follows that $\overline{Q_{1}}$ is self-dual. Similarly, $%
\overline{Q_{2}}$ is self-dual.
\end{proof}

\subsection{\protect\bigskip Case II}

If $q\equiv 1\pmod{4}$ then the codes have the following properties.

\begin{theorem}
\label{bQR41} With the notation as in the Definition \ref{defQR}, the
following hold for $(\mathbb{F}_{p}+v\mathbb{F}_{p})$-QR codes:

\item[a)] $Q_{1}$ and $Q_{1}^{\prime }$ are equivalent to $Q_{2}$ and $%
Q_{2}^{\prime }$, respectively;

\item[b)] $Q_{1}\cap Q_{2}=\left\langle h\right\rangle $ and $%
Q_{1}+Q_{2}=R_{p,q}$ where $h=1+e_{1}+e_{2}$;

\item[c)] $\left\vert Q_{1}\right\vert =p^{\left( q+1\right) }=\left\vert
Q_{2}\right\vert ;$

\item[d)] $Q_{1}=Q_{1}^{\prime }+\left\langle h\right\rangle ,\
Q_{2}=Q_{2}^{\prime }+\left\langle h\right\rangle ;$

\item[e)] $\left\vert Q_{1}^{\prime }\right\vert =p^{\left( q-1\right)
}=\left\vert Q_{2}^{\prime }\right\vert ;$

\item[f)] $Q_{1}^{\bot }=Q_{2}^{\prime }$ and $Q_{2}^{\bot }=Q_{1}^{\prime
}; $

\item[g)] $Q_{1}^{\prime }\cap Q_{2}^{\prime }=\left\{ 0\right\} $ and $%
Q_{1}^{\prime }+Q_{2}^{\prime }=\left\langle 1+h\right\rangle .$
\end{theorem}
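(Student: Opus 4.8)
The plan is to follow the proof of Theorem \ref{bQR} almost verbatim and isolate the single place where the hypothesis $q\equiv 1\pmod 4$ changes the outcome. Parts (a)--(e) and (g) have exactly the same statements as in Theorem \ref{bQR}, and their proofs use only the idempotent calculus of the two cited theorems (the intersection of two idempotent-generated codes has generator $fg$, and the sum has generator $f+g-fg$) together with the fixed $\mathbb{F}_{p}$-level identities $ab=h$, $a+b-ab=1$, $a'b'=0$, $a'+b'=1+h$, and $a=a'+h-a'h$, $b=b'+h-b'h$. None of these invoke the substitution $x\mapsto x^{-1}$, so I would simply reproduce those computations unchanged: the products $\left[(1-v)a+vb\right]\left[(1-v)b+va\right]=ab=h$ and $\left[(1-v)a'+vb'\right]\left[(1-v)b'+va'\right]=a'b'=0$ again collapse because $v(1-v)=0$, and the cardinality counts in (c) and (e) follow formally from (a), (b), (d) exactly as before.

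The only genuinely different part is (f), and the entire difference is carried by the arithmetic of $-1$ modulo $q$. When $q\equiv 3\pmod 4$ one has $-1\in N_{q}$, so $\mu_{-1}$ interchanges $Q_{q}$ and $N_{q}$, giving $a(x^{-1})=b$ and $b(x^{-1})=a$; this is what produced $Q_{1}^{\perp}=Q_{1}'$ in Theorem \ref{bQR}. Now $q\equiv 1\pmod 4$ forces $-1\in Q_{q}$, so $\mu_{-1}$ fixes both $Q_{q}$ and $N_{q}$ setwise and hence $a(x^{-1})=a$ and $b(x^{-1})=b$. I would then apply the dual-idempotent theorem to $Q_{1}=\langle (1-v)a+vb\rangle$:
\[
1-\left[(1-v)a(x^{-1})+vb(x^{-1})\right]=1-\left[(1-v)a+vb\right]=(1-v)(1-a)+v(1-b).
\]
Using the same $\mathbb{F}_{p}$ duality relations as in Theorem \ref{bQR}, namely $1-a=b'$ and $1-b=a'$, this equals $(1-v)b'+va'$, which by Definition \ref{defQR} is precisely the idempotent generator of $Q_{2}'$. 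Hence $Q_{1}^{\perp}=Q_{2}'$. Since $\mu_{n}$ (for $n\in N_{q}$) is a coordinate permutation it commutes with Euclidean duality and, by part (a), swaps $Q_{1}\leftrightarrow Q_{2}$ and $Q_{1}'\leftrightarrow Q_{2}'$; applying it to $Q_{1}^{\perp}=Q_{2}'$ yields $Q_{2}^{\perp}=Q_{1}'$.

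The substantive point to get right---and the main obstacle---is the bookkeeping of which even-like idempotent the dual lands on: because $a(x^{-1})=a$ rather than $b$, the $(1-v)$- and $v$-components no longer pair with their ``own'' primed idempotents but with the crossed ones, turning the relation $Q_{1}^{\perp}=Q_{1}'$ of the $q\equiv 3$ case into the crossed relation $Q_{1}^{\perp}=Q_{2}'$. This crossing is also why the self-orthogonality conclusion of Theorem \ref{bQR}(f) disappears here: by (d) we have $Q_{1}'\subset Q_{1}$, whereas $(Q_{1}')^{\perp}=Q_{2}$, and since $Q_{1}\cap Q_{2}=\langle h\rangle$ is small, $Q_{1}'\not\subset (Q_{1}')^{\perp}$. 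I would close by remarking that this is exactly the structural feature which, in Section 4, makes the extended codes Hermitian rather than Euclidean self-dual.
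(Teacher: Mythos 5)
Your proposal is correct and takes essentially the same route as the paper, which omits this proof precisely because it is ``similar to the proof of Theorem \ref{bQR}'': you reproduce those idempotent computations verbatim for (a)--(e) and (g), and you correctly isolate the one genuine change in (f), namely that $q\equiv 1\pmod{4}$ gives $-1\in Q_{q}$, hence $a\left( x^{-1}\right) =a$ and $b\left( x^{-1}\right) =b$, so the dual idempotent $1-\left[ \left( 1-v\right) a+vb\right] =\left( 1-v\right) b^{\prime }+va^{\prime }$ yields the crossed relation $Q_{1}^{\perp }=Q_{2}^{\prime }$. Your derivation of $Q_{2}^{\perp }=Q_{1}^{\prime }$ via the coordinate permutation $\mu _{n}$ (which commutes with Euclidean duality) is also sound, as is your observation that the self-orthogonality conclusion of Theorem \ref{bQR}(f) necessarily disappears here.
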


\begin{proof}
The proof is similar to the proof of Theorem \ref{bQR} and hence is omitted
here.
\end{proof}

We finish this section with the following theorem describing the duality
relation between the extended quadratic residue codes:

\begin{theorem}
\label{41}Suppose $q\equiv 1\pmod{4}$ and $Q_{1},Q_{2}$ are $\mathbb{F}_{p}+v%
\mathbb{F}_{p}$-QR codes as given in Definiton \ref{defQR}. Then the dual of
$\overline{Q_{1}}$ is $\overline{Q_{2}}$ and the dual of $\overline{Q_{2}}$
is $\overline{Q_{1}}$.
\end{theorem}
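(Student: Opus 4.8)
The plan is to mimic the extension of Theorem~\ref{43} and then play the two dual relations of Theorem~\ref{bQR41}(f) against one another. Recall from Theorem~\ref{bQR41} that $Q_{1}=Q_{1}^{\prime }+\left\langle h\right\rangle$ and $Q_{2}=Q_{2}^{\prime }+\left\langle h\right\rangle$, where $h$ is the all-one vector, while the Euclidean duals satisfy $Q_{1}^{\bot }=Q_{2}^{\prime }$ and $Q_{2}^{\bot }=Q_{1}^{\prime }$. In particular $Q_{2}^{\prime }\subseteq Q_{1}^{\bot }$ and $Q_{1}^{\prime }\subseteq Q_{2}^{\bot }$, so $Q_{1}^{\prime }$ and $Q_{2}^{\prime }$ are mutually orthogonal, and, since $h\in Q_{1}\cap Q_{2}$, each of $Q_{1}^{\prime },Q_{2}^{\prime }$ is orthogonal to $h$. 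These are precisely the orthogonalities that the unextended blocks of the two generator matrices will require.

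First I would fix, for $i=1,2$, a generator matrix of $\overline{Q_{i}}$ whose top block is a generator matrix $G_{i}^{\prime }$ of $Q_{i}^{\prime }$ bordered by a zero entry in the adjoined coordinate $\infty$, and whose last row is the all-one vector bordered by a constant $c_{i}$, exactly in the layout of \eqref{gen}; here $c_{1},c_{2}$ are the entries inherited from the columns used to extend the $\mathbb{F}_{p}$-QR codes. I would then check orthogonality of $\overline{Q_{1}}$ against $\overline{Q_{2}}$ block by block. A $G_{1}^{\prime }$-row meets a $G_{2}^{\prime }$-row only in their unextended inner product, which vanishes because $Q_{1}^{\prime }\perp Q_{2}^{\prime }$; a $G_{i}^{\prime }$-row meets the extended all-one row in the inner product of a $Q_{i}^{\prime }$-codeword with $h$ (the $\infty$-coordinates being paired with a zero), which vanishes because $Q_{i}^{\prime }\perp h$; and the extended all-one row of $\overline{Q_{1}}$ meets that of $\overline{Q_{2}}$ in $c_{1}c_{2}+q$. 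Thus $\overline{Q_{1}}\perp \overline{Q_{2}}$ holds precisely when $c_{1}c_{2}\equiv -q\pmod{p}$, which is the defining property of the extension columns in the case $q\equiv 1\pmod{4}$.

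This yields the inclusion $\overline{Q_{2}}\subseteq \overline{Q_{1}}^{\bot }$, which I would upgrade to equality by a cardinality count. Adjoining a coordinate does not change the number of codewords, so $\left\vert \overline{Q_{1}}\right\vert =\left\vert Q_{1}\right\vert =p^{q+1}$ and $\left\vert \overline{Q_{2}}\right\vert =p^{q+1}$ by Theorem~\ref{bQR41}(c). Since $\mathbb{F}_{p}+v\mathbb{F}_{p}$ is Frobenius, $\left\vert \overline{Q_{1}}\right\vert \left\vert \overline{Q_{1}}^{\bot }\right\vert =(p^{2})^{q+1}$, whence $\left\vert \overline{Q_{1}}^{\bot }\right\vert =p^{q+1}=\left\vert \overline{Q_{2}}\right\vert$. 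Together with the inclusion this forces $\overline{Q_{1}}^{\bot }=\overline{Q_{2}}$, and taking duals (or rerunning the symmetric computation with the roles of $Q_{1}$ and $Q_{2}$ exchanged) gives $\overline{Q_{2}}^{\bot }=\overline{Q_{1}}$.

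The main obstacle I anticipate is pinning down the extension constants: I must confirm that the ``same columns'' prescription really realizes $c_{1}c_{2}\equiv -q\pmod{p}$. Note the contrast with Theorem~\ref{43}, where self-duality forced $c_{1}=c_{2}=r$ with $r^{2}=-q$ and quadratic reciprocity was needed to guarantee that $-q$ is a residue; here only the weaker product condition $c_{1}c_{2}=-q$ is required, so no residue hypothesis on $-q$ enters, but one must still verify that the field-level extension supplies such a pair. A secondary point is the cardinality identity $\left\vert C\right\vert \left\vert C^{\bot }\right\vert =(p^{2})^{n}$, whose validity rests on the Frobenius property of $\mathbb{F}_{p}+v\mathbb{F}_{p}$ recorded in Section~2.
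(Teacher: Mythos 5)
Your proposal is correct and takes essentially the same route as the paper's proof: the same bordered generator matrices with top block $G_{i}^{\prime}$ and an extended all-one last row, the same block-by-block orthogonality checks drawn from Theorem~\ref{bQR41}, and the same comparison of orders to upgrade the inclusion $\overline{Q_{2}}\subseteq \overline{Q_{1}}^{\bot}$ to equality. The extension-constant issue you flag is settled in the paper simply by stipulation: it takes $g_{1}=1$ and $g_{2}=-q$ in the last rows, so $g_{1}g_{2}+q\equiv 0\pmod{p}$ holds by construction.
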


\begin{proof}
Since, $Q_{i}=Q_{i}^{\prime }+\left\langle h\right\rangle $ for $i=1,2$, $%
\overline{Q_{i}}$ can be defined as the code generated by the following
matrix $\overline{G_{i}}$;%
\begin{equation*}
\begin{array}{cccccc}
\infty & 0 & 1 & 2 & \cdots & q-1%
\end{array}%
\end{equation*}%
\begin{equation*}
\left(
\begin{array}{cccccc}
0 &  &  &  &  &  \\
0 &  &  & G_{i}^{\prime } &  &  \\
\vdots &  &  &  &  &  \\
g_{i} & 1 & 1 & 1 & \cdots & 1%
\end{array}%
\right)
\end{equation*}%
where $G_{i}^{\prime }$ is a generator matrix of $Q_{i}^{\prime }$ and $%
g_{1}=1,g_{2}=-q$. By Theorem \ref{bQR41} $Q_{2}^{\bot }=Q_{1}^{\prime }$
and $Q_{1}^{\bot }=Q_{2}^{\prime }$. Since the all $1$ vector is in both $%
Q_{1}$ and $Q_{2}$, this implies that it is orthogonal to both $%
G_{1}^{\prime }$ and $G_{2}^{\prime }$. Moreover, the last row of $\overline{%
G_{1}}$ is orthogonal to the last row of $\overline{G_{2}}$. It follows that
the codes are orthogonal to each other. By comparing their orders we see
that $\overline{Q_{1}}^{\bot }=\overline{Q_{2}}$.
\end{proof}

Recall that $Q_{1}$ and $Q_{2}$ are equivalent and thus have the same weight
enumerator. Since $w_{L}\left( 1\right) =w_{L}\left( -q\right) $, $\overline{%
Q_{1}}$ and $\overline{Q_{2}}$ have the same weight enumerator. Thus, by
theorems \ref{43} and \ref{41} and proposition \ref{duality} we obtain the
following corollary:

\begin{corollary}
The Gray images of the extended quadratic residue codes over $\mathbb{F}%
_{p}+v\mathbb{F}_{p}$ are self-dual codes if $q\equiv 3\pmod{4}$ and
formally self-dual codes if $q\equiv 1\pmod{4}$.
\end{corollary}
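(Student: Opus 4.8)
The plan is to treat the two residue classes of $q$ modulo $4$ separately, transferring the ring-level duality statements of Theorems \ref{43} and \ref{41} across the Gray map $\varphi$ by means of Proposition \ref{duality} and the orthogonality-preservation property established in its proof.

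For $q\equiv 3\pmod 4$ the argument is immediate: Theorem \ref{43} asserts that $\overline{Q_1}$ and $\overline{Q_2}$ are self-dual over $\mathbb{F}_p+v\mathbb{F}_p$, and Proposition \ref{duality} then says their Gray images are $p$-ary self-dual codes. So this half reduces to a single invocation.

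For $q\equiv 1\pmod 4$ I would aim instead for formal self-duality, since Theorem \ref{41} gives only $\overline{Q_1}^{\bot}=\overline{Q_2}$ (and vice versa) rather than genuine self-duality. First I would show $\varphi(\overline{Q_1})^{\bot}=\varphi(\overline{Q_2})$: the computation in the proof of Proposition \ref{duality} shows $\varphi$ sends orthogonal codewords to orthogonal codewords, whence $\varphi(\overline{Q_2})\subseteq \varphi(\overline{Q_1})^{\bot}$; since extending $Q_i$ by one coordinate leaves $\abs{\overline{Q_i}}=p^{q+1}$ (Theorem \ref{bQR41}), each Gray image is a $(q+1)$-dimensional $\mathbb{F}_p$-subspace of $\mathbb{F}_p^{2(q+1)}$, so $\varphi(\overline{Q_1})^{\bot}$ also has dimension $q+1$ and the inclusion is forced to be an equality. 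Next I would use the remark preceding the statement: $Q_1$ and $Q_2$ are equivalent (Theorem \ref{bQR41}) and the extending columns $1$ and $-q$ satisfy $w_L(1)=w_L(-q)$, so $\overline{Q_1}$ and $\overline{Q_2}$ share a Lee weight enumerator; as $\varphi$ is weight-preserving, $\varphi(\overline{Q_1})$ and $\varphi(\overline{Q_2})$ share a Hamming weight enumerator. Combining the two facts, $\varphi(\overline{Q_1})$ has the same weight enumerator as its dual $\varphi(\overline{Q_1})^{\bot}=\varphi(\overline{Q_2})$, which is exactly formal self-duality; the symmetric argument handles $\varphi(\overline{Q_2})$.

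The one step needing care is upgrading $\varphi(\overline{Q_2})\subseteq \varphi(\overline{Q_1})^{\bot}$ to an equality in the case $q\equiv 1\pmod 4$. This rests on a clean dimension count, for which I must confirm that the extension procedure does not change the cardinality of the code (so that $\abs{\overline{Q_i}}=p^{q+1}$, giving $\mathbb{F}_p$-dimension $q+1$ for each Gray image and hence also for its dual in $\mathbb{F}_p^{2(q+1)}$). Everything else is direct citation of the already-established theorems together with the distance-preserving property of $\varphi$, so I do not expect further difficulty.
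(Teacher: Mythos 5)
Your proposal is correct and takes essentially the same approach as the paper, which obtains the corollary directly from Theorem \ref{43} together with Proposition \ref{duality} in the case $q\equiv 3\pmod 4$, and from Theorem \ref{41} combined with the observation that $\overline{Q_1}$ and $\overline{Q_2}$ share a Lee weight enumerator (equivalence of $Q_1,Q_2$ plus $w_L(1)=w_L(-q)$) in the case $q\equiv 1\pmod 4$. Your explicit dimension count upgrading $\varphi(\overline{Q_2})\subseteq\varphi(\overline{Q_1})^{\bot}$ to equality simply spells out a step the paper leaves implicit in its appeal to Proposition \ref{duality}.
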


\section{Quadratic residue codes over $\mathbb{F}_{2}+v\mathbb{F}_{2}$,
extended quadratic residue codes and binary images}

In this section, we define extended quadratic residue codes over $\mathbb{F}%
_{2}+v\mathbb{F}_{2}$. Further, we provide two optimal Hermitian self-dual
codes as applications to the main theorems.

\begin{proposition}
\cite{Harada} Let $d_{H}$ and $d_{L}$ be the minimum Hamming and Lee weights
of $C=\left( 1+v\right) C_{1}\oplus vC_{2}$, respectively. Then%
\begin{equation*}
d_{H}\left( C\right) =d_{L}\left( C\right) =\min \left\{ d\left(
C_{1}\right) ,d\left( C_{2}\right) \right\}
\end{equation*}%
where $d\left( C_{i}\right) $ denotes the minimum weight of the binary code $%
C_{i}$.
\end{proposition}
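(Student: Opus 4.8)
The plan is to exploit the complete set of orthogonal idempotents $\{1+v,\,v\}$ in $\R$, which satisfy $v(1+v)=0$ and $(1+v)+v=1$, in order to read both weights off coordinate by coordinate. Every codeword of $C$ has the form $x=(1+v)\mathbf{c}_1+v\,\mathbf{c}_2$ with $\mathbf{c}_1\in C_1$ and $\mathbf{c}_2\in C_2$, and because the two idempotents are orthogonal the $i$-th coordinate of $x$ depends only on the pair $(c_{1,i},c_{2,i})\in\F_2^2$, taking the value $0$, $1+v$, $v$, or $1$ according as $(c_{1,i},c_{2,i})$ equals $(0,0)$, $(1,0)$, $(0,1)$, $(1,1)$. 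First I would record this four-entry table, since everything else follows from it.

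For the Hamming weight I would observe from the table that the $i$-th coordinate of $x$ vanishes precisely when $c_{1,i}=c_{2,i}=0$; hence $w_{H}(x)=\abs{\mathrm{supp}(\mathbf{c}_1)\cup\mathrm{supp}(\mathbf{c}_2)}$. For the Lee weight I would feed $x$ through the Gray map $\varphi$: writing $x=\mathbf{c}_1+(\mathbf{c}_1+\mathbf{c}_2)v$ gives $\varphi(x)=(\mathbf{c}_1,\,\mathbf{c}_1+(\mathbf{c}_1+\mathbf{c}_2))=(\mathbf{c}_1,\mathbf{c}_2)$ coordinatewise, so by the isometry property of $\varphi$ one gets $w_{L}(x)=w_{H}(\varphi(x))=w_{H}(\mathbf{c}_1)+w_{H}(\mathbf{c}_2)$. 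The same value can be read straight off the table, each coordinate contributing $1$ for each of $c_{1,i},c_{2,i}$ that is nonzero.

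It then remains to minimize each expression over the nonzero codewords. Taking $\mathbf{c}_2=0$ and $\mathbf{c}_1$ a minimum-weight word of $C_1$ produces a nonzero $x$ with $w_{H}(x)=w_{L}(x)=d(C_1)$, and symmetrically one attains $d(C_2)$; this yields the bounds $d_{H}(C),\,d_{L}(C)\le\min\{d(C_1),d(C_2)\}$. For the reverse inequality I would note that a nonzero $x$ forces $\mathbf{c}_1\neq 0$ or $\mathbf{c}_2\neq 0$: if $\mathbf{c}_1\neq 0$ then $w_{H}(x)\ge\abs{\mathrm{supp}(\mathbf{c}_1)}\ge d(C_1)$ and $w_{L}(x)\ge w_{H}(\mathbf{c}_1)\ge d(C_1)$, and likewise with $C_2$, so both weights of $x$ are at least $\min\{d(C_1),d(C_2)\}$.

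The argument is essentially bookkeeping; the one point needing care is the lower bound, where one must be sure that no cancellation can shrink a nonzero codeword below $\min\{d(C_1),d(C_2)\}$. This is exactly what the idempotent decomposition guarantees: the two binary components occupy orthogonal constituents of $\R$, so across coordinates their supports can only combine by union (for $w_{H}$) or add (for $w_{L}$) and never interfere, which makes the lower bound immediate once the coordinatewise table is in hand.
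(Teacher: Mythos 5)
Your proof is correct, but note that the paper offers no proof of this proposition at all --- it is quoted verbatim from \cite{Harada} --- so there is no internal argument to compare against; your write-up supplies the missing justification. The route you take is the standard one for the splitting $C=\left(1+v\right)C_{1}\oplus vC_{2}$: the orthogonal idempotents $1+v$ and $v$ give the coordinatewise value table, from which $w_{H}\left(x\right)=\abs{\mathrm{supp}\left(\mathbf{c}_{1}\right)\cup \mathrm{supp}\left(\mathbf{c}_{2}\right)}$, and the computation $\varphi\left(\left(1+v\right)\mathbf{c}_{1}+v\mathbf{c}_{2}\right)=\left(\mathbf{c}_{1},\mathbf{c}_{2}\right)$ (consistent with the paper's Gray map $a+bv\mapsto\left(a,a+b\right)$ and Lee weights $w_{L}\left(v\right)=w_{L}\left(1+v\right)=1$, $w_{L}\left(1\right)=2$) gives $w_{L}\left(x\right)=w_{H}\left(\mathbf{c}_{1}\right)+w_{H}\left(\mathbf{c}_{2}\right)$; the two-sided bound then follows exactly as you state, with the direct-sum condition guaranteeing that $x\neq 0$ forces $\mathbf{c}_{1}\neq 0$ or $\mathbf{c}_{2}\neq 0$, so no cancellation can occur. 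This is essentially the argument in the cited source, and every step checks out.
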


A self-dual code is called Type IV if all the Hamming weights are even, a
binary code is called even if all the weights are even.

\begin{proposition}
\cite{Bachoc} \cite{Dougherty1} If $C=\left( 1+v\right) C_{1}\oplus vC_{2}$
then $C$ is Euclidean self-dual if and only if $C_{1}$ and $C_{2}$ are
binary self-dual codes. $C=\left( 1+v\right) C_{1}\oplus vC_{2}$ is
Euclidean Type IV self-dual if and only if $C_{1}=C_{2}$.
\end{proposition}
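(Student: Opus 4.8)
The plan is to reduce everything to the known structure of binary self-dual codes via the decomposition $C = (1+v)C_1 \oplus vC_2$. The key observation is that the ring $\mathbb{F}_2 + v\mathbb{F}_2$ splits through its two orthogonal idempotents $1+v$ and $v$ (note $(1+v)v = v + v^2 = v + v = 0$ since $v^2 = v$, and $(1+v) + v = 1$), so any codeword decomposes uniquely along this CRT-style splitting and the Euclidean inner product factors accordingly. First I would record how the Euclidean inner product on $(\mathbb{F}_2+v\mathbb{F}_2)^n$ behaves under this decomposition: writing $x = (1+v)x_1 + vx_2$ and $y = (1+v)y_1 + vy_2$ with $x_i,y_i$ binary vectors, the products of the idempotents collapse to $(1+v)(1+v) = 1+v$, $vv = v$, and $(1+v)v = 0$, so that $\langle x,y\rangle_E = (1+v)\langle x_1,y_1\rangle + v\langle x_2,y_2\rangle$ where the inner products on the right are the ordinary binary ones. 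This is the computational heart of the argument and I expect it to be routine but worth stating cleanly.

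From this bilinear identity the first equivalence follows directly. Since $1+v$ and $v$ are linearly independent over $\mathbb{F}_2$, the element $(1+v)\langle x_1,y_1\rangle + v\langle x_2,y_2\rangle$ vanishes if and only if both binary scalars $\langle x_1,y_1\rangle$ and $\langle x_2,y_2\rangle$ vanish. Hence $x \in C^\perp$ exactly when $x_1$ is orthogonal to every first component of $C$ and $x_2$ is orthogonal to every second component. This shows $C^\perp = (1+v)C_1^\perp \oplus vC_2^\perp$. Therefore $C = C^\perp$ holds if and only if $(1+v)C_1 \oplus vC_2 = (1+v)C_1^\perp \oplus vC_2^\perp$, and by uniqueness of the decomposition this is equivalent to $C_1 = C_1^\perp$ and $C_2 = C_2^\perp$ simultaneously, i.e. both $C_1$ and $C_2$ are binary self-dual. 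This settles the first statement.

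For the Type IV part I would use the preceding proposition, which gives $d_H(C) = \min\{d(C_1), d(C_2)\}$, together with the Lee/Gray weight data. Assuming $C$ is already Euclidean self-dual (so $C_1, C_2$ are binary self-dual), Type IV means all Hamming weights of $C$ are even. The generators contributing odd Hamming weight are precisely those whose component codes differ, so I would argue that every codeword has even Hamming weight if and only if the two constituent codes coincide. Concretely, a codeword of the form $(1+v)x_1 + vx_2$ has Hamming weight equal to the number of coordinates where at least one of $x_1, x_2$ is nonzero, and controlling its parity across all codewords forces $C_1$ and $C_2$ to have the same support behaviour; pushing this through shows $C_1 = C_2$. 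The main obstacle will be making the Type IV direction fully rigorous: one must verify carefully that the evenness of \emph{all} Hamming weights, not merely the minimum weight, is equivalent to $C_1 = C_2$, which requires tracking when the supports of the two binary components can disagree without producing an odd-weight codeword. I expect the cleanest route is to observe that $C_1 \neq C_2$ produces a codeword supported on the symmetric difference of their supports that has odd Hamming weight, giving the contrapositive, while $C_1 = C_2$ makes the map $x \mapsto (1+v)x + vx = x$ land in the doubly-even regime forced by self-duality.
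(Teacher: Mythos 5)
The paper quotes this proposition from \cite{Bachoc} and \cite{Dougherty1} without giving a proof, so there is no in-paper argument to compare against; your proposal must stand on its own. The first half does: the orthogonal-idempotent splitting $(1+v)^2=1+v$, $v^2=v$, $(1+v)v=0$ gives $\langle x,y\rangle_E=(1+v)\langle x_1,y_1\rangle+v\langle x_2,y_2\rangle$, whence $C^{\perp}=(1+v)C_1^{\perp}\oplus vC_2^{\perp}$, and uniqueness of components yields the first equivalence. That part is correct, complete, and is the standard argument.

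The Type IV half, however, contains a genuine gap, and the specific mechanism you propose would fail as stated. Your witness for the contrapositive --- ``a codeword supported on the symmetric difference of their supports that has odd Hamming weight'' --- cannot exist: for $x_1\in C_1$, $x_2\in C_2$ the symmetric-difference vector $x_1+x_2$ has weight $w(x_1)+w(x_2)-2\,\lvert\mathrm{supp}(x_1)\cap\mathrm{supp}(x_2)\rvert$, which is \emph{always} even once $C_1,C_2$ are self-dual (hence even-weight) codes. The correct mechanism runs through the union, i.e.\ through the ring codeword itself: since $(1+v)(x_1)_i+v(x_2)_i$ is nonzero exactly when $(x_1)_i$ or $(x_2)_i$ is, one has $w_H\bigl((1+v)x_1+vx_2\bigr)=w(x_1)+w(x_2)-\lvert\mathrm{supp}(x_1)\cap\mathrm{supp}(x_2)\rvert$, and $\lvert\mathrm{supp}(x_1)\cap\mathrm{supp}(x_2)\rvert\equiv\langle x_1,x_2\rangle\pmod 2$. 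With $w(x_1),w(x_2)$ even, every Hamming weight in $C$ therefore has the parity of $\langle x_1,x_2\rangle$, so $C$ is Type IV if and only if $\langle x_1,x_2\rangle=0$ for all $x_1\in C_1$, $x_2\in C_2$, i.e.\ $C_2\subseteq C_1^{\perp}=C_1$, and then $C_1=C_2$ since both have size $2^{n/2}$. The same identity repairs your forward direction, which as written treats only diagonal codewords via $x\mapsto(1+v)x+vx=x$: a general element of $(1+v)C_1\oplus vC_1$ has independent components $x_1\neq x_2$, and the appeal to a ``doubly-even regime'' is a red herring --- binary self-duality gives only evenness, which combined with the self-orthogonality relation $\langle x_1,x_2\rangle=0$ is exactly what makes all Hamming weights of $C$ even. (Your implicit assumption that $C$ is already Euclidean self-dual in the Type IV claim is the intended reading of the statement and is fine.)
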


\begin{proposition}
\cite{Bachoc} \cite{Dougherty1} \label{H}If $C=\left( 1+v\right) C_{1}\oplus
vC_{2}$ then $C$ is Hermitian self-dual if and only if $C_{1}=C_{2}^{\perp }$%
. $C=\left( 1+v\right) C_{1}\oplus vC_{1}^{\perp }$ is Hermitian Type IV
self-dual if and only if $C_{1}$ and $C_{1}^{\perp }$ are even codes.
\end{proposition}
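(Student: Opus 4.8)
The plan is to pass to the Chinese-Remainder description of the ring. Since $1+v$ and $v$ are orthogonal idempotents with $(1+v)+v=1$, every $r\in\mathbb{F}_2+v\mathbb{F}_2$ is written uniquely as $r=(1+v)s+vt$ with $s,t\in\mathbb{F}_2$, giving a ring isomorphism $\mathbb{F}_2+v\mathbb{F}_2\cong\mathbb{F}_2\times\mathbb{F}_2$, $r\mapsto(s,t)$. The first thing I would record is the behaviour of conjugation under this identification: checking the four elements against $\overline{0}=0$, $\overline{1}=1$, $\overline{v}=1+v$, $\overline{1+v}=v$ shows that conjugation is exactly the coordinate swap, i.e. $\overline{(1+v)s+vt}=(1+v)t+vs$. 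Extending componentwise, a codeword of $C=(1+v)C_1\oplus vC_2$ corresponds to a pair $(\vec a_1,\vec a_2)$ with $\vec a_1\in C_1$, $\vec a_2\in C_2$, and its conjugate corresponds to $(\vec a_2,\vec a_1)$.

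Next I would compute the Hermitian form in these coordinates. Writing $x\leftrightarrow(\vec a_1,\vec a_2)$ and $y\leftrightarrow(\vec b_1,\vec b_2)$, multiplying $x$ by $\overline y\leftrightarrow(\vec b_2,\vec b_1)$ componentwise and summing gives
\[ \langle x,y\rangle_H=(1+v)\langle \vec a_1,\vec b_2\rangle+v\langle \vec a_2,\vec b_1\rangle, \]
where the two inner products on the right are the ordinary Euclidean forms on $\mathbb{F}_2^n$. Since $1+v$ and $v$ are linearly independent, $\langle x,y\rangle_H=0$ iff both Euclidean products vanish. Letting $y$ range over $C$, this shows $C^{\ast}=(1+v)C_2^{\perp}\oplus vC_1^{\perp}$. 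Comparing the two components, $C=C^{\ast}$ holds iff $C_1=C_2^{\perp}$ and $C_2=C_1^{\perp}$, and since dualizing is an involution these two conditions are equivalent to the single condition $C_1=C_2^{\perp}$, proving the first assertion.

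For the Type IV statement I would take $C=(1+v)C_1\oplus vC_1^{\perp}$, which is already Hermitian self-dual by the first part, so only the evenness of all Hamming weights is at stake. A component $x_i=(1+v)a_{1,i}+va_{2,i}$ is nonzero exactly when $(a_{1,i},a_{2,i})\neq(0,0)$, so the Hamming weight of $x$ is $\lvert\mathrm{supp}(\vec a_1)\cup\mathrm{supp}(\vec a_2)\rvert$. Inclusion--exclusion gives
\[ \mathrm{wt}_H(x)=\mathrm{wt}(\vec a_1)+\mathrm{wt}(\vec a_2)-\lvert\mathrm{supp}(\vec a_1)\cap\mathrm{supp}(\vec a_2)\rvert, \]
and modulo $2$ the intersection term equals $\langle \vec a_1,\vec a_2\rangle$, which is $0$ because $\vec a_2\in C_1^{\perp}$. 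Hence $\mathrm{wt}_H(x)\equiv\mathrm{wt}(\vec a_1)+\mathrm{wt}(\vec a_2)\pmod 2$. Taking $\vec a_2=0$ and then $\vec a_1=0$ shows that all weights are even only if both $C_1$ and $C_1^{\perp}$ are even; conversely, if both are even the weight is a sum of two even numbers, so every codeword has even Hamming weight. This gives the Type IV characterization.

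The one genuinely non-formal point is the Hamming-weight computation: the weight of a codeword of $C$ is not the sum of the weights of its two binary components but the size of the union of their supports, so an overlap-correction term appears. The crux is that this correction term is, modulo $2$, exactly the Euclidean inner product $\langle \vec a_1,\vec a_2\rangle$, which the self-duality relation $C_2=C_1^{\perp}$ forces to vanish. Everything else---the idempotent decomposition, conjugation being a coordinate swap, and the inner-product computation---is a routine verification.
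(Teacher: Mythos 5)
Your proof is correct. Note that the paper itself gives no proof of this proposition --- it is quoted from \cite{Bachoc} and \cite{Dougherty1} --- so there is no internal argument to compare against; your Chinese-Remainder route is the standard one used in those references for codes over $\mathbb{F}_2\times\mathbb{F}_2$. The two points that actually need care are both handled properly: first, the observation that conjugation is the coordinate swap under $r=(1+v)s+vt\mapsto(s,t)$, which decouples the Hermitian form into the pair of Euclidean forms $\left(\langle \vec a_1,\vec b_2\rangle,\langle \vec a_2,\vec b_1\rangle\right)$ and yields $C^{\ast}=(1+v)C_2^{\perp}\oplus vC_1^{\perp}$, with the reduction of the two conditions $C_1=C_2^{\perp}$, $C_2=C_1^{\perp}$ to one via $C_2^{\perp\perp}=C_2$; and second, the Type IV part, where you correctly resist the temptation to write the Hamming weight of a codeword as the sum of the weights of its binary components, instead computing it as the size of the union of supports and showing the overlap correction is $\langle\vec a_1,\vec a_2\rangle\equiv 0\pmod 2$ precisely because $\vec a_2\in C_1^{\perp}$. (Implicit but harmless: comparing $C$ with $C^{\ast}$ componentwise uses the uniqueness of the CRT decomposition, which is immediate since the isomorphism carries $C$ to $C_1\times C_2$.) Nothing is missing.
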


The following gives an upper bound on the Bachoc distances of Hermitian
self-dual codes.

\begin{theorem}
\cite{Bachoc} Let $C$ be a Hermitian self-dual code of length $n$ over $%
\mathbb{F}_{2}\times \mathbb{F}_{2}$ then $w_{B}\left( C\right) \leq 2\left( %
\left[ n/3\right] +1\right) $.
\end{theorem}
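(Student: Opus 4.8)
The plan is to prove the bound by the invariant-theoretic (Gleason-type) method applied to the Bachoc-weight enumerator. First I would record a reduction that makes the constant $[n/3]$ transparent: every Bachoc weight occurring in a Hermitian self-dual code is even. Indeed, by Proposition \ref{H} we may write $C=(1+v)C_{1}\oplus vC_{1}^{\perp}$, so a codeword has the form $(1+v)\mathbf{u}+v\mathbf{w}$ with $\mathbf{u}\in C_{1}$, $\mathbf{w}\in C_{1}^{\perp}$. A direct check of the four possibilities in each coordinate shows $w_{B}((1+v)\mathbf{u}+v\mathbf{w})=\mathrm{wt}(\mathbf{u}\wedge\mathbf{w})+2\,\mathrm{wt}(\mathbf{u}\oplus\mathbf{w})$, whose parity equals that of the Euclidean inner product $\mathbf{u}\cdot\mathbf{w}$ over $\mathbb{F}_{2}$; since $\mathbf{w}\in C_{1}^{\perp}$, this product vanishes, so $w_{B}$ is even on $C$. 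Consequently the enumerator $W_{C}(x,y)=\sum_{c\in C}x^{2n-w_{B}(c)}y^{w_{B}(c)}$ is homogeneous of degree $2n$ with only even powers of $y$; equivalently $W_{C}(x,y)=V(x^{2},y^{2})$ for a homogeneous $V$ of degree $n$, and $d_{B}(C)$ is twice the smallest positive power of the second variable appearing in $V$.

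Second, I would turn self-duality into an invariance statement. Using the Hermitian MacWilliams identity for codes over $\mathbb{F}_{2}\times\mathbb{F}_{2}$ (the transform built from the additive characters twisted by the conjugation $\overline{v}=1+v$) together with $C=C^{\ast}$, the complete weight enumerator of $C$ is fixed by the associated linear substitution. Symmetrizing over the ring symmetries that preserve $w_{B}$, in particular the swap $v\leftrightarrow 1+v$ already built into $W_{C}$, yields a finite matrix group $G$ under which $W_{C}$, and hence $V$, is invariant. The next task is to determine the ring of invariants: I would compute its Molien series and show it is the polynomial ring generated in degrees $2$ and $6$ in the variables $x,y$, equivalently generated in degrees $1$ and $3$ after the substitution $X=x^{2}$, $Y=y^{2}$.

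Third, I would run the extremal-coefficient argument. Writing $V=\sum_{j=0}^{[n/3]}c_{j}\,\psi_{1}^{\,n-3j}\psi_{3}^{\,j}$ in terms of the two basic invariants $\psi_{1},\psi_{3}$ of degrees $1$ and $3$, there are exactly $[n/3]+1$ free coefficients $c_{j}$. The coefficient of $X^{n}$ equals $1$ (the zero codeword), which is one normalization, and the remaining $[n/3]$ parameters can be used to annihilate the coefficients of $Y^{1},\dots,Y^{[n/3]}$, so the first possibly nonzero term is $Y^{[n/3]+1}$, giving $d_{B}(C)\leq 2([n/3]+1)$. To make this rigorous one shows that requiring the vanishing of $Y^{1},\dots,Y^{[n/3]+1}$ in addition to the normalization would impose $[n/3]+2$ linearly independent conditions on only $[n/3]+1$ parameters, forcing $V\equiv 0$ and contradicting $V(1,1)=|C|>0$.

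Finally, the hard part. The conceptual steps above are standard once the invariant ring is known, so the real work is twofold: (i) producing the correct Hermitian MacWilliams transformation matrix and verifying that the group it generates, after Bachoc symmetrization, has invariant ring exactly the one generated in degrees $2$ and $6$ — this is the Molien-series computation and the identification of explicit generators, where one must be careful that the conjugation twist is handled correctly; and (ii) establishing the linear independence of the vanishing conditions in the extremal argument, which amounts to a leading-term (triangularity) analysis of the monomials $\psi_{1}^{\,n-3j}\psi_{3}^{\,j}$ in their $Y$-expansions. I expect (i) to be the main obstacle, since everything downstream — in particular the appearance of $[n/3]$ rather than some other linear function of $n$ — is dictated by the degrees $2$ and $6$ of the generators.
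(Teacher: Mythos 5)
You are reconstructing a proof the paper never gives: the theorem is quoted from \cite{Bachoc}, and the sentence immediately following it in the paper (``they correspond to extremal modular lattices'') signals how Bachoc actually proves it. Her argument is arithmetic rather than a two-variable Gleason argument: since $2=\pi\overline{\pi}$ splits in $\mathcal{O}=\mathbb{Z}\bigl[(1+\sqrt{-7})/2\bigr]$ and $\mathcal{O}/2\mathcal{O}\cong\mathbb{F}_{2}\times\mathbb{F}_{2}$ with complex conjugation inducing the swap $v\leftrightarrow 1+v$, Construction A turns a Hermitian self-dual code of length $n$ into a $7$-modular lattice of dimension $2n$ whose minimum norm is controlled by $w_{B}$ (the values $0,1,2,2$ are exactly the minimal norms of coset representatives); the theta series is then a modular form of weight $n$ for the Fricke group of level $7$, the space of such forms has dimension $\left[ n/3\right] +1$, and the extremal triangularity argument is run on $q$-expansions there, giving $2+2\left[ 2n/6\right]=2\left(\left[ n/3\right]+1\right)$. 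Your preliminary reduction, by contrast, is correct and genuinely useful: $w_{B}\bigl((1+v)\mathbf{u}+v\mathbf{w}\bigr)=\mathrm{wt}(\mathbf{u}\wedge\mathbf{w})+2\,\mathrm{wt}(\mathbf{u}\oplus\mathbf{w})$ together with Proposition \ref{H} does show $w_{B}$ is even on Hermitian self-dual codes, consistent with the all-even exponents in Examples \ref{optimal17} and \ref{optimal18}.

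However, your step (i) is not merely ``the main obstacle''; it is false, and the parameter count of step three collapses with it. Take $n=2$ and the two Hermitian self-dual codes $C_{A}=v\,\mathbb{F}_{2}^{2}$ (i.e.\ $C_{1}=\{00\}$, $C_{2}=C_{1}^{\perp}$) and $C_{B}=(1+v)\langle 11\rangle\oplus v\langle 11\rangle$. Their Bachoc weight enumerators are $x^{4}+2x^{2}y^{2}+y^{4}$ and $x^{4}+x^{2}y^{2}+2y^{4}$: linearly independent, whereas a free invariant ring with generators in degrees $2$ and $6$ has one-dimensional degree-$4$ part ($\psi_{1}^{2}$ only). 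So no finite matrix group in two variables can have the ring of invariants you posit, and indeed the Hermitian MacWilliams transform acts on the three symmetrized variables $z_{0},z_{1},s$ (classes $\{0\},\{1\},\{v,1+v\}$) via $z_{0}\mapsto\frac{1}{2}(z_{0}+z_{1}+2s)$, $z_{1}\mapsto\frac{1}{2}(z_{0}+z_{1}-2s)$, $s\mapsto\frac{1}{2}(z_{0}-z_{1})$, which does \emph{not} preserve the locus $z_{1}^{2}=z_{0}s$ on which your two-variable specialization lives (one computes $z_{1}'^{2}-z_{0}'s'=\frac{1}{2}(z_{0}-s)(z_{1}-2s)$ there), so no linear substitution in $(x,y)$ is induced at all. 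Worse, the bound cannot be recovered by \emph{any} linear-algebra argument on the two-variable enumerator: $2W_{C_{B}}-W_{C_{A}}=x^{4}+3y^{4}$ lies in the span of genuine enumerators, has nonnegative coefficients and the correct value $4=|C|$ at $x=y=1$, yet its ``minimum weight'' is $4>2\left(\left[ 2/3\right]+1\right)=2$. The divisibility by $3$ in the bound comes from the weight-$3$ cusp form for the Fricke group of level $7$, not from a degree-$6$ polynomial invariant; to repair the proof you must pass to the lattice/theta-series side (or at least to the full symmetrized enumerator with positivity input), as Bachoc does.
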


Codes that meet this bound are called extremal codes and Bachoc has shown
that they correspond to extremal modular lattices.

\subsection{Case I}

If $q=8r-1$ then $e_{1}$ and $e_{2}$ are generating idempotents of $\left[ q,%
\frac{q+1}{2}\right] $ binary quadratic residue codes and $e_{1}e_{2}=h$.

In this case, the $(\mathbb{F}_{2}+v\mathbb{F}_{2})$-QR codes turn out to
have the following form.

\begin{definition}
\label{defQR1} If $q=8r-1$ let $Q_{1}=\left\langle \left( 1+v\right)
e_{1}+ve_{2}\right\rangle $, $Q_{2}=\left\langle \left( 1+v\right)
e_{2}+ve_{1}\right\rangle $ and $Q_{1}^{\prime }=\left\langle \left(
1+v\right) \left( 1+e_{2}\right) +v\left( 1+e_{1}\right) \right\rangle $, $%
Q_{2}^{\prime }=\left\langle \left( 1+v\right) \left( 1+e_{1}\right)
+v\left( 1+e_{2}\right) \right\rangle $. These four codes are called
quadratic residue codes over $\mathbb{F}_{2}+v\mathbb{F}_{2}$ of length $q$.
\end{definition}

\begin{example}
For $q=7$ the QR-code $Q_{1}$ is generated by the idempotent $e=\left(
1+v\right) \left( x+x^{2}+x^{4}\right) +v\left( x^{3}+x^{5}+x^{6}\right) $
in $R_{2,7}$. As the extended QR-code $\overline{Q_{1}}$ we get the
self-dual code which is generated by%
\begin{equation*}
\left(
\begin{array}{cccccccc}
0 & 1 & 1+v & 1+v & v & 1+v & v & v \\
0 & v & 1 & 1+v & 1+v & v & 1+v & v \\
0 & v & v & 1 & 1+v & 1+v & v & 1+v \\
1 & 1 & 1 & 1 & 1 & 1 & 1 & 1%
\end{array}%
\right)
\end{equation*}%
with $d_{L}\left( \overline{Q_{1}}\right) =4$ and $d_{B}\left( \overline{%
Q_{1}}\right) =7$. So its Gray image corresponds to a $\left[ 16,8,4\right] $
optimal self-dual binary code. $\overline{Q_{1}}$ has Lee weight enumerator
\begin{equation*}
1+28z^{4}+198z^{8}+28z^{12}+z^{16},
\end{equation*}%
Hamming weight enumerator
\begin{equation*}
1+28z^{4}+56z^{5}+84z^{6}+56z^{7}+31z^{8},
\end{equation*}%
and Bachoc weight enumerator
\begin{equation*}
1+56z^{7}+29z^{8}+84z^{10}+28z^{12}+56z^{13}+2z^{16}\text{.}
\end{equation*}
\end{example}

\begin{example}
For $q=23$, the extended quadratic residue code $\overline{Q_{1}}$ is a
self-dual code with $d_{B}\left( \overline{Q_{1}}\right) =14$, $d_{H}\left(
\overline{Q_{1}}\right) =8=d_{L}\left( \overline{Q_{1}}\right) $ and Lee
weight enumerator%
\begin{eqnarray*}
&&1+1518z^{8}+5152z^{12}+577599z^{16}+3910368z^{20}+7787940z^{24} \\
&&+3910368z^{28}+577599z^{32}+5152z^{36}+1518z^{40}+z^{48}.
\end{eqnarray*}
\end{example}

\subsection{Case II}

If $q=8r+1$ then $e_{1}$ and $e_{2}$ are generating idempotents of $\left[ q,%
\frac{q-1}{2}\right] $ binary quadratic residue codes and $e_{1}e_{2}=0$.

In this case, the $(\mathbb{F}_{2}+v\mathbb{F}_{2})$-QR codes turn out to
have the following form.

\begin{definition}
\label{defQR2} If $q=8r+1$ let $Q_{1}=\left\langle \left( 1+v\right) \left(
1+e_{1}\right) +v\left( 1+e_{2}\right) \right\rangle $, $Q_{2}=\left\langle
\left( 1+v\right) \left( 1+e_{2}\right) +v\left( 1+e_{1}\right)
\right\rangle $ and $Q_{1}^{\prime }=\left\langle \left( 1+v\right)
e_{2}+ve_{1}\right\rangle $, $Q_{2}^{\prime }=\left\langle \left( 1+v\right)
e_{1}+ve_{2}\right\rangle $. These four codes are called quadratic residue
codes over $\mathbb{F}_{2}+v\mathbb{F}_{2}$ of length $q$.
\end{definition}

In the next theorem we introduce a family of Hermitian self-dual codes by
using quadratic residue codes:

\begin{theorem}
\label{Hermitian}Suppose $q=8r+1$ and $Q_{1}^{\prime },Q_{2}^{\prime }$ are $%
\mathbb{F}_{2}+v\mathbb{F}_{2}$-QR codes as given in Definition \ref{defQR2}%
. Then $Q_{1}^{\prime }+\left\langle \boldsymbol{v}\right\rangle $, $%
Q_{2}^{\prime }+\left\langle \boldsymbol{v}\right\rangle $, $Q_{1}^{\prime
}+\left\langle \boldsymbol{1+v}\right\rangle $ and $Q_{2}^{\prime
}+\left\langle \boldsymbol{1+v}\right\rangle $ are Hermitian self-dual codes
of length $q$ where $\boldsymbol{v}$ denotes the polynomial $vh$ which
corresponds to the all-$v$ vector of length $q$ and $\boldsymbol{1+v}$
denotes the polynomial $\left( 1+v\right) h$ which corresponds to the all-$%
\left( 1+v\right) $ vector.
\end{theorem}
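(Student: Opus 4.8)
The plan is to use Proposition~\ref{H}, which reduces the Hermitian self-duality of a code $C=(1+v)C_1\oplus vC_2$ over $\F_2+v\F_2$ to the single binary condition $C_1=C_2^{\perp}$. So the first step is to rewrite each of the four codes in the theorem in the ``CRT form'' $(1+v)C_1\oplus vC_2$ by identifying the binary codes $C_1$ and $C_2$. From Definition~\ref{defQR2} the idempotent generator of $Q_1^{\prime}$ is $(1+v)e_2+ve_1$, so under the decomposition $a+bv\mapsto(a,a+b)$ (equivalently, reading off the $(1-v)=(1+v)$ and $v$ components over $\F_2$) we have $Q_1^{\prime}=(1+v)\langle e_2\rangle\oplus v\langle e_1\rangle$. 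Since $q=8r+1$, the idempotents $e_1,e_2$ generate the two $\left[q,\frac{q-1}{2}\right]$ binary QR codes. Adding the generator $\boldsymbol v=vh$ only affects the $v$-component, so I expect $Q_1^{\prime}+\langle\boldsymbol v\rangle=(1+v)\langle e_2\rangle\oplus v\langle e_1,h\rangle$, and similarly adding $\boldsymbol{1+v}=(1+v)h$ affects only the $(1+v)$-component, giving $(1+v)\langle e_2,h\rangle\oplus v\langle e_1\rangle$. The parallel statements hold for $Q_2^{\prime}$ with the roles of $e_1,e_2$ swapped.

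Once each code is in the form $(1+v)C_1\oplus vC_2$, the task reduces to verifying $C_1=C_2^{\perp}$ for the relevant binary codes. The second step, then, is to recall the classical duality theory of binary QR codes for $q\equiv 1\pmod 8$: the two $\left[q,\frac{q-1}{2}\right]$ codes $\langle e_1\rangle$ and $\langle e_2\rangle$ are the even-like QR codes, and each of the $\left[q,\frac{q+1}{2}\right]$ codes $\langle e_i,h\rangle$ is an odd-like QR code; the precise duality relation is that the dual of an even-like QR code is the odd-like QR code associated with the complementary residue class (this mirrors the over-ring relations $Q_1^{\bot}=Q_2^{\prime}$, $Q_2^{\bot}=Q_1^{\prime}$ recorded in Theorem~\ref{bQR41} and the idempotent-dual formula, Theorem of~\cite{Huffman}, that $1-f(x^{-1})$ generates $C^{\perp}$). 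For $q=8r+1$ one has $-1\in Q_q$, so $e_i(x^{-1})=e_i$, which keeps the residue/non-residue split fixed and makes these identifications clean. Checking $C_1=C_2^{\perp}$ for each of the four codes is then a matter of matching dimensions ($\frac{q-1}{2}$ versus $\frac{q+1}{2}$, which sum to $q$, as required for dual binary codes of length $q$) together with the explicit QR duality just recalled.

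The main obstacle I anticipate is pinning down exactly which odd-like QR code is dual to which even-like QR code, i.e.\ tracking whether the duality sends $\langle e_1\rangle$ to $\langle e_2,h\rangle$ or to $\langle e_1,h\rangle$, since a sign error in the residue bookkeeping would break the Hermitian condition. Concretely, for $Q_1^{\prime}+\langle\boldsymbol v\rangle=(1+v)\langle e_2\rangle\oplus v\langle e_1,h\rangle$ I need $\langle e_2\rangle=\langle e_1,h\rangle^{\perp}$, and for $Q_1^{\prime}+\langle\boldsymbol{1+v}\rangle=(1+v)\langle e_2,h\rangle\oplus v\langle e_1\rangle$ I need $\langle e_2,h\rangle=\langle e_1\rangle^{\perp}$. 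Both reduce to the single fact that the even-like QR code generated by $e_1$ is the Euclidean dual of the odd-like QR code generated by $e_2$ and $h$ (and symmetrically), which is exactly the $q\equiv 1\pmod 8$ case of the standard binary QR duality theorem. I would invoke that fact, verify the dimension count, and then apply Proposition~\ref{H} to each of the four codes in turn to conclude Hermitian self-duality.
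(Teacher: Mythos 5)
Your proposal is correct and follows essentially the same route as the paper: both arguments reduce Hermitian self-duality to Proposition~\ref{H} by exhibiting each code in the form $\left( 1+v\right) C_{1}\oplus vC_{2}$ and then verifying $C_{1}=C_{2}^{\perp }$ using the fact that $-1\in Q_{q}$ for $q=8r+1$ (so $e_{i}\left( x^{-1}\right) =e_{i}$). The only cosmetic difference is how the components are identified: the paper computes the idempotent of the sum code via $f+g-fg$ inside $R_{2,q}$ (obtaining, e.g., $C_{2}=\left\langle 1+e_{2}\right\rangle $, which coincides with your $\left\langle e_{1},h\right\rangle $ since $e_{1}h=0$), whereas you read the components off directly from the CRT splitting and cite the classical even-like/odd-like binary QR duality, which is exactly the paper's $1-f\left( x^{-1}\right) $ computation in disguise.
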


\begin{proof}
$Q_{1}^{\prime }+\left\langle \boldsymbol{v}\right\rangle $ has idempotent
generator
\begin{eqnarray*}
&&\left( 1+v\right) e_{2}+ve_{1}+\boldsymbol{v}-\left( \left( 1+v\right)
e_{2}+ve_{1}\right) \boldsymbol{v} \\
&=&e_{2}+ve_{2}+ve_{1}+v+ve_{1}+ve_{2}-ve_{1}v\left( 1+e_{1}+e_{2}\right) \\
&=&e_{2}+v-v^{2}e_{1}-v^{2}e_{1}^{2}-v^{2}e_{1}e_{2} \\
&=&e_{2}+v-ve_{1}-ve_{1}-0 \\
&=&e_{2}+v \\
&=&\left( 1+v\right) e_{2}+v\left( 1+e_{2}\right)
\end{eqnarray*}%
So, $Q_{1}^{\prime }+\left\langle \boldsymbol{v}\right\rangle =\left(
1+v\right) C_{1}\oplus vC_{2}$ where $C_{1}=\left\langle e_{2}\right\rangle $
and $C_{2}=\left\langle 1+e_{2}\right\rangle $. $C_{2}^{\perp }$ has
idempotent generator $1-\left( 1+e_{2}\left( x^{-1}\right) \right)
=e_{2}\left( x^{-1}\right) =e_{2}$ since $-1\in Q_{q}$, it follows that $%
C_{1}=C_{2}^{\perp }$. Hence, by Proposition \ref{H}. $Q_{1}^{\prime
}+\left\langle \boldsymbol{v}\right\rangle $ is Hermitian self-dual.
Similarly, $Q_{2}^{\prime }+\left\langle \boldsymbol{v}\right\rangle $ is
Hermitian self-dual.

$Q_{1}^{\prime }+\left\langle \boldsymbol{1+v}\right\rangle $ has idempotent
generator%
\begin{eqnarray*}
&&\left( 1+v\right) e_{2}+ve_{1}+\boldsymbol{1+v}-\left( \left( 1+v\right)
e_{2}+ve_{1}\right) \left( \boldsymbol{1+v}\right) \\
&=&1+v+e_{1}-\left( 1+v\right) e_{2}\left( 1+e_{1}+e_{2}\right) \\
&=&1+v+e_{1}-\left( 1+v\right) \left( e_{2}+0+e_{2}\right) \\
&=&1+v+e_{1} \\
&=&\left( 1+v\right) \left( 1+e_{1}\right) +ve_{1}.
\end{eqnarray*}%
So, $Q_{1}^{\prime }+\left\langle \boldsymbol{1+v}\right\rangle =\left(
1+v\right) C_{1}\oplus vC_{2}$ where $C_{1}=\left\langle
1+e_{1}\right\rangle $ and $C_{2}=\left\langle e_{1}\right\rangle $. $%
C_{2}^{\perp }$ has idempotent generator $1-e_{1}\left( x^{-1}\right)
=1-e_{1}=1+e_{1}$. Hence, by Proposition \ref{H} $Q_{1}^{\prime
}+\left\langle \boldsymbol{1+v}\right\rangle $ is Hermitian self-dual.
Similarly, $Q_{2}^{\prime }+\left\langle \boldsymbol{1+v}\right\rangle $ is
Hermitian self-dual.
\end{proof}

\begin{example}
\label{optimal17}For $q=17$, the code $Q_{1}^{\prime }+\left\langle
\boldsymbol{v}\right\rangle $ is the unique optimal Hermitian self-dual code
of length $17$ with $d_{B}\left( Q_{1}^{\prime }+\left\langle v\right\rangle
\right) =10$ and Bachoc weight enumerator
\begin{eqnarray*}
&&1+187z^{10}+1156z^{12}+2924z^{14}+10030z^{16}+18513z^{18}+27744z^{20} \\
&&+29954z^{22}+23188z^{24}+12019z^{26}+850z^{30}+85z^{32}+z^{34}\text{.}
\end{eqnarray*}
\end{example}

\begin{theorem}
Suppose $q=8r+1$ and $Q_{1},Q_{2}$ are $\mathbb{F}_{2}+v\mathbb{F}_{2}$-QR
codes as given in Definition \ref{defQR2}. Then $\overline{Q_{1}}$ and $%
\overline{Q_{2}}$ are Hermitian self-dual codes.
\end{theorem}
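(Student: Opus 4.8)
The plan is to reduce the Hermitian statement to the Euclidean duality already established in Theorem \ref{41}, exploiting the fact that the Hermitian form is obtained from the Euclidean one by precomposing the second argument with the conjugation $a+bv\mapsto\overline{a+bv}$ of $\mathbb{F}_{2}+v\mathbb{F}_{2}$. Writing $\widetilde{C}$ for the image of a code $C$ under componentwise conjugation, the identity $\langle x,y\rangle_{H}=\langle x,\widetilde{y}\rangle_{E}$ gives $C^{\ast}=(\widetilde{C})^{\bot}$ for every code over $\mathbb{F}_{2}+v\mathbb{F}_{2}$. Hence it is enough to prove $\widetilde{\overline{Q_{1}}}=\overline{Q_{2}}$: granting this and using $\overline{Q_{2}}^{\bot}=\overline{Q_{1}}$ from Theorem \ref{41}, we obtain $\overline{Q_{1}}^{\ast}=(\widetilde{\overline{Q_{1}}})^{\bot}=\overline{Q_{2}}^{\bot}=\overline{Q_{1}}$, and symmetrically $\overline{Q_{2}}^{\ast}=\overline{Q_{2}}$.

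First I would verify that conjugation interchanges the two codes at the level of generating idempotents. Conjugation is a ring automorphism of $\mathbb{F}_{2}+v\mathbb{F}_{2}$ fixing $\mathbb{F}_{2}$ and swapping $v$ with $1+v$; extended to $R_{2,q}$ coefficientwise it fixes $x$, so it sends the idempotent $(1+v)(1+e_{1})+v(1+e_{2})$ generating $Q_{1}$ to $v(1+e_{1})+(1+v)(1+e_{2})$, which is precisely the generator of $Q_{2}$ in Definition \ref{defQR2}. Therefore $\widetilde{Q_{1}}=Q_{2}$ and likewise $\widetilde{Q_{2}}=Q_{1}$. Along the way I would note that these codes coincide with the codes of Definition \ref{defQR} for $a=1+e_{1}$, $b=1+e_{2}$ (the idempotents of the $[q,(q+1)/2]$ binary QR codes, since $1-v=1+v$ in characteristic two), so that Theorem \ref{41} does apply to them.

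Finally I would check that conjugation commutes with the extension, that is, $\widetilde{\overline{Q_{1}}}=\overline{\widetilde{Q_{1}}}=\overline{Q_{2}}$. In the generator matrix used in Theorem \ref{41} every row coming from $Q_{i}^{\prime}$ carries a $0$ in the adjoined coordinate, while the all-one row carries $g_{i}$, where $g_{1}=1$ and $g_{2}=-q$; as $q$ is odd both reduce to $1$ in $\mathbb{F}_{2}+v\mathbb{F}_{2}$. Since $\overline{0}=0$ and $\overline{1}=1$, conjugation fixes the adjoined coordinate of each generator, so conjugating and then extending agrees with extending and then conjugating. This last commutation is the only delicate point, and I would make it precise by observing that the value placed in the adjoined coordinate of a codeword is an $\mathbb{F}_{2}$-valued linear functional determined solely by the all-one row, which conjugation leaves invariant; once this is settled the theorem follows formally from Theorem \ref{41}.
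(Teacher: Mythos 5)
Your proposal is correct, but it proves the theorem by a genuinely different route than the paper. The paper works directly on the generator matrix (\ref{gen}): it invokes Theorem \ref{Hermitian} to conclude that $Q_{1}^{\prime }+\left\langle \boldsymbol{v}\right\rangle $ is Hermitian self-dual (which makes the rows coming from $G_{1}^{\prime }$ pairwise Hermitian-orthogonal), then checks by hand that the extended all-one row is Hermitian-orthogonal to those rows, using $\left\vert Q_{q}\right\vert =\left\vert N_{q}\right\vert =4r$, and to itself, since the extended length $q+1=8r+2$ is even, and finishes with a cardinality comparison. You instead reduce the Hermitian statement to Euclidean duality via the conjugation automorphism: the identity $\left\langle x,y\right\rangle _{H}=\left\langle x,\widetilde{y}\right\rangle _{E}$ and hence $C^{\ast }=(\widetilde{C})^{\perp }$ is correct (conjugation is indeed multiplicative, as it swaps the two factors of $\mathbb{F}_{2}\times \mathbb{F}_{2}$, so images of ideals are ideals and $\widetilde{\left\langle g\right\rangle }=\left\langle \widetilde{g}\right\rangle $); conjugation visibly interchanges the generating idempotents of $Q_{1}$ and $Q_{2}$ (and of $Q_{1}^{\prime }$ and $Q_{2}^{\prime }$); and it commutes with extension because the adjoined entries $0$, $g_{1}=1$ and $g_{2}=-q\equiv 1\pmod{2}$ are all fixed. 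Your two supporting checks are exactly the right ones to flag: that Definition \ref{defQR2} is the specialization of Definition \ref{defQR} with $a=1+e_{1}$, $b=1+e_{2}$ (so $a^{\prime }=e_{2}$, $b^{\prime }=e_{1}$), and that Theorem \ref{41} applies at $p=2$ --- both are legitimate, and indeed the paper itself cites Theorem \ref{bQR41} at $p=2$ in its own proof, so Section 3 is being used there in the same way. Comparing the two: your argument is more conceptual, avoids all idempotent computations inside the proof, and makes transparent why Hermitian self-duality appears exactly in the case $q\equiv 1\pmod{4}$, where Euclidean duality pairs $\overline{Q_{1}}$ with $\overline{Q_{2}}$; the paper's argument is self-contained within Section 4 and reuses Theorem \ref{Hermitian}, which it needs anyway to produce the length-$q$ Hermitian self-dual family. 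One cosmetic remark: your final appeal to an ``$\mathbb{F}_{2}$-valued linear functional'' is unnecessary --- your earlier observation that conjugating the generator matrix of $\overline{Q_{1}}$ row by row yields precisely a generator matrix of $\overline{Q_{2}}$ (since $\widetilde{G_{1}^{\prime }}$ generates $Q_{2}^{\prime }$ and the all-one row is fixed) already settles the commutation of conjugation with extension.
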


\begin{proof}
By Theorem \ref{bQR41} $Q_{1}=Q_{1}^{\prime }+\left\langle h\right\rangle ,$
and $\overline{Q_{1}}$ has the $\frac{q+1}{2}\times \left( q+1\right) $
generator matrix (\ref{gen}). By Theorem \ref{Hermitian}. $Q_{1}^{\prime
}+\left\langle \boldsymbol{v}\right\rangle $ is Hermitian self-dual and it
is easily seen that any row of this generator matrix is orthogonal to all-$1$
vector since $\left\vert Q_{q}\right\vert =\left\vert N_{q}\right\vert =%
\frac{q-1}{2}=4r$ and all-$1$ vector is orthogonal to itself with respect to
Hermitian inner product. Therefore $\overline{Q_{1}}$ is Hermitian
self-dual. Similarly, $\overline{Q_{2}}$ is Hermitian.
\end{proof}

\begin{example}
\label{optimal18}For $q=17$, the extended quadratic residue code $\overline{%
Q_{1}}$ is the unique optimal Hermitian self-dual code of length $18$ with $%
d_{B}\left( \overline{Q_{1}}\right) =12$ and Bachoc weight enumerator%
\begin{eqnarray*}
&&1+1734z^{12}+1836z^{14}+13158z^{16}+23869z^{18}+46818z^{20}+55080z^{22} \\
&&+57324z^{24}+37026z^{26}+18054z^{28}+6324z^{30}+756z^{32}+153z^{34}+2z^{36}
\end{eqnarray*}%
$d_{H}\left( \overline{Q_{1}}\right) =6=d_{L}\left( \overline{Q_{1}}\right) $
and $\overline{Q_{1}}$ is an optimal Hermitian Type IV self-dual code of
length $18$ as given in \cite{Harada}.
\end{example}

In \cite{betsumiya} it is proven that there are no extremal codes with
respect to the Bachoc weight for the lengths greater than $10$, a self-dual
code is called optimal if it has the best possible distance. Betsumiya et.
al. obtained unique optimal Hermitian self-dual codes for lengths $17$ and $%
18$ which are obtained by quadratic residue codes in a different way in the
examples \ref{optimal18} and \ref{optimal17}.

\section{Examples of quadratic residue codes over $\mathbb{F}_{p}+v\mathbb{F}%
_{p}$, for odd prime $p$}

In this section, we investigate the examples of QR-codes over $\mathbb{F}%
_{p}+v\mathbb{F}_{p}$ for odd prime $p$. We obtain some extremal and optimal
self-dual codes over $\mathbb{F}_{p}$ as Gray images of the extended
QR-codes over $\mathbb{F}_{p}+v\mathbb{F}_{p}$. By $QR_{p}\left( q\right) $
we denote the $\left[ q,\frac{q+1}{2}\right] $-quadratic residue code of
length $q$ over $\mathbb{F}_{p}+v\mathbb{F}_{p}$. As usual the notation $%
\overline{QR_{p}\left( q\right) }$ is used for the extended QR-code.

\begin{example}
For $p=3$ and $q=11$ we get the $\left( \mathbb{F}_{3}+v\mathbb{F}%
_{3}\right) $-QR code $QR_{3}\left( 11\right) $ which is generated by the
matrix%
\begin{equation*}
\left(
\begin{array}{ccccccccccc}
0 & a & v & a & a & a & v & v & v & a & v \\
v & 0 & a & v & a & a & a & v & v & v & a \\
a & v & 0 & a & v & a & a & a & v & v & v \\
v & a & v & 0 & a & v & a & a & a & v & v \\
v & v & a & v & 0 & a & v & a & a & a & v \\
v & v & v & a & v & 0 & a & v & a & a & a%
\end{array}%
\right)
\end{equation*}%
where $a=1+2v$. It has minimum Lee distance $7$, so it corresponds to a $%
\left[ 22,12,7\right] _{3}$ optimal code. Moreover, $\overline{QR_{3}\left(
11\right) }$ has minimum Lee distance $9$ and it is self-dual so its Gray
image is an extremal code with parameters $\left[ 24,12,9\right] _{3}$. Note
that this coincides with the extended QR-code over $\mathbb{F}_{3}$.
\end{example}

\begin{example}
For $p=3$ and $q=23$, the Gray image of $QR_{3}\left( 23\right) $
corresponds to an optimal $\left[ 46,24,13\right] _{3}$ code and the image
of $\overline{QR_{3}\left( 23\right) }$ corresponds to an extremal self-dual
$\left[ 48,24,15\right] _{3}$-code which coincides with the extended QR-code
over $\mathbb{F}_{3}$.
\end{example}

\begin{example}
For $p=5$ and $q=11$, $QR_{5}\left( 11\right) $ is generated by the
idempotent $\left( 1+4v\right) \left( 1+2e_{1}+4e_{2}\right) +v\left(
1+2e_{2}+4e_{1}\right) $ and its Gray image is a $\left[ 22,12,7\right] _{5}$%
-code which is optimal. The image of $\overline{QR_{5}\left( 11\right) }$ is
the unique optimal self-dual $\left[ 24,12,9\right] _{5}$ code with weight
enumerator;%
\begin{equation*}
1+1056z^{9}+11018z^{10}+36960z^{11}+212352z^{12}+\cdots .
\end{equation*}
\end{example}

\begin{example}
For $p=5$ and $q=19$, $QR_{5}\left( 19\right) $ is generated by the
idempotent $\left( 1+4v\right) 4e_{1}+v4e_{2}$ and its image is a $\left[
38,20,11\right] _{5}$-code which is optimal. $\overline{QR_{5}\left(
19\right) }$ is self-dual and its Gray image corresponds to an optimal
self-dual code with parameters $\left[ 40,20,13\right] _{5}$.
\end{example}

\begin{example}
For $p=7$ and $q=19$, $QR_{7}\left( 19\right) $ has generating idempotent $%
\left( 1+6v\right) \left( 2+4e_{1}+6e_{2}\right) +v\left(
2+4e_{2}+6e_{1}\right) $. We obtain a self-dual code of parameters $\left[
40,20,13\right] _{7}$ as the Gray image of $\overline{QR_{7}\left( 19\right)
}$.
\end{example}

We finish this section by combining the codes obtained in the following
tables:

\begin{center}
Table 1: QR codes for $p=3,5$ and $7$

\begin{tabular}{|l|l|l|}
\hline
& The code over $\mathbb{F}_{p}+v\mathbb{F}_{p}$ & Gray image; over $\mathbb{%
F}_{p}$ \\ \cline{2-3}
$QR_{3}\left( 11\right) $ & $\left( 11,9^{6},7\right) $ & $%
[22,12,7]_{3}^{\ast }$ \\ \cline{2-3}
$\overline{QR_{3}\left( 11\right) }$ & $\left( 12,9^{6},9\right) $ & $%
[24,12,9]_{3}^{\ast }$ extremal self-dual \\ \cline{2-3}
$QR_{3}\left( 13\right) $ & $\left( 13,9^{7},7\right) $ & $%
[26,14,7]_{3}^{\ast }$ \\ \cline{2-3}
$\overline{QR_{3}\left( 13\right) }$ & $\left( 14,9^{7},8\right) $ & $%
[28,14,8]_{3}$ formally self-dual \\ \cline{2-3}
$QR_{3}\left( 23\right) $ & $\left( 23,9^{12},13\right) $ & $%
[46,24,13]_{3}^{\ast }$ \\ \cline{2-3}
$\overline{QR_{3}\left( 23\right) }$ & $\left( 24,9^{12},15\right) $ & $%
[48,24,15]_{3}^{\ast }$ extremal self-dual \\ \cline{2-3}
$QR_{3}\left( 37\right) $ & $\left( 37,9^{19},14\right) $ & $[74,38,14]_{3}$
\\ \cline{2-3}
$\overline{QR_{3}\left( 37\right) }$ & $\left( 38,9^{19},16\right) $ & $%
[76,38,16]_{3}$ formally self-dual \\ \cline{2-3}
$QR_{5}\left( 11\right) $ & $\left( 11,25^{6},7\right) $ & $[22,12,7]_{5}$
\\ \cline{2-3}
$\overline{QR_{5}\left( 11\right) }$ & $\left( 12,25^{6},9\right) $ & $%
[24,12,9]_{5}^{\ast }$ \ optimal self-dual \\ \cline{2-3}
$QR_{5}\left( 19\right) $ & $\left( 19,25^{10},11\right) $ & $%
[38,20,11]_{5}^{\ast }$ \\ \cline{2-3}
$\overline{QR_{5}\left( 19\right) }$ & $\left( 20,25^{10},13\right) $ & $%
[40,20,13]_{5}^{\ast }$ optimal self-dual \\ \cline{2-3}
$QR_{5}\left( 29\right) $ & $\left( 29,25^{15},13\right) $ & $[58,30,13]_{5}$
\\ \cline{2-3}
$\overline{QR_{5}\left( 29\right) }$ & $\left( 30,25^{15},14\right) $ & $%
[60,30,14]_{5}$ formally self-dual \\ \cline{2-3}
$QR_{5}\left( 31\right) $ & $\left( 31,25^{16},16\right) $ & $[62,32,16]_{5}$
\\ \cline{2-3}
$\overline{QR_{5}\left( 31\right) }$ & $\left( 32,25^{16},18\right) $ & $%
[64,32,18]_{5}^{\ast }$ optimal self-dual \\ \cline{2-3}
$QR_{7}\left( 3\right) $ & $\left( 3,49^{2},3\right) $ & $[6,4,3]_{7}^{\ast
} $ \\ \cline{2-3}
$\overline{QR_{7}\left( 3\right) }$ & $\left( 4,49^{2},4\right) $ & $%
[8,4,4]_{7}$ self-dual \\ \cline{2-3}
$QR_{7}\left( 19\right) $ & $\left( 19,49^{10},11\right) $ & $[38,20,11]_{7}$
\\ \cline{2-3}
$\overline{QR_{7}\left( 19\right) }$ & $\left( 20,49^{10},13\right) $ & $%
[40,20,13]_{7}$ self-dual \\ \hline
\end{tabular}
\end{center}

In the table, $\ast $ denotes that the code has the best possible minimum
distance by \cite{Grassl}. Optimal self-dual codes in the above table
coincides with the quadratic double circulant (QDC) codes constructed by
Gaborit in \cite{Gaborit}. Some examples of extended QR codes for larger
primes are given in the following table;

\begin{center}
Table 2: QR codes for $p=11,13,17,19,23$ and $29$

\begin{tabular}{|l|l|l|}
\hline
& The code over $\mathbb{F}_{p}+v\mathbb{F}_{p}$ & Gray image; over $\mathbb{%
F}_{p}$ \\ \cline{2-3}
$\overline{QR_{11}\left( 5\right) }$ & $\left( 6,121^{3},5\right) $ & $%
[12,6,5]_{11}$ formally self-dual \\ \cline{2-3}
$\overline{QR_{11}\left( 7\right) }$ & $\left( 8,121^{4},7\right) $ & $%
[16,8,7]_{11}$ self-dual \\ \cline{2-3}
$\overline{QR_{11}\left( 19\right) }$ & $\left( 20,121^{10},13\right) $ & $%
[40,20,13]_{11}$ self-dual \\ \cline{2-3}
$\overline{QR_{13}\left( 3\right) }$ & $\left( 8,169^{2},4\right) $ & $%
[8,4,4]_{13}$ self-dual, $C_{13,8,10}$ in \cite{betsumiya2} \\ \cline{2-3}
$\overline{QR_{13}\left( 17\right) }$ & $\left( 18,169^{9},12\right) $ & $%
[36,18,12]_{13}$ formally self-dual \\ \cline{2-3}
$\overline{QR_{17}\left( 13\right) }$ & $\left( 14,289^{14},10\right) $ & $%
[28,14,10]_{17}$ formally self-dual \\ \cline{2-3}
$\overline{QR_{17}\left( 19\right) }$ & $\left( 20,289^{10},13\right) $ & $%
[40,20,13]_{17}$ self-dual \\ \cline{2-3}
$\overline{QR_{19}\left( 3\right) }$ & $\left( 4,361^{2},4\right) $ & $%
[8,4,4]_{19}$ self-dual \\ \cline{2-3}
$\overline{QR_{19}\left( 5\right) }$ & $\left( 6,361^{3},6\right) $ & $%
[12,6,6]_{19}$ formally self-dual \\ \cline{2-3}
$\overline{QR_{23}\left( 7\right) }$ & $\left( 8,529^{4},7\right) $ & $%
[16,8,7]_{23}$ self-dual \\ \cline{2-3}
$\overline{QR_{29}\left( 5\right) }$ & $\left( 6,841^{3},6\right) $ & $%
[12,6,6]_{29}$ formally self-dual \\ \cline{2-3}
$\overline{QR_{29}\left( 7\right) }$ & $\left( 8,841^{4},7\right) $ & $%
[16,8,7]_{29}$ self-dual \\ \hline
\end{tabular}
\end{center}

Self-dual codes up to length $20$ in the above table are not optimal by \cite%
{betsumiya2}.

\end{document}